\theoremstyle{plain}
\newtheorem{theorem}{Theorem}[section]
\newtheorem*{theorem*}{Theorem}
\newtheorem{lemma}[theorem]{Lemma}
\newtheorem{proposition}[theorem]{Proposition}
\newtheorem{definition}{Definition}
\theoremstyle{remark}
\newtheorem*{remark}{Remark}
\newcommand{\RomanNumeralCaps}[1]{\MakeUppercase{\romannumeral #1}}
\DeclareMathOperator{\sgn}{sgn}
\newcommand{\wt}[1]{\widetilde{#1}}
\newcommand{\wtc}[1]{\widetilde{\mathcal{#1}}}
\def\e{\varepsilon}
\def\hom{{\rm hom}}
\DeclareMathOperator*{\dom}{dom}
\DeclareMathOperator{\im}{Im}
\begin{document}

\title{Phase transition in a periodic tubular structure}
\author{Alexander V. Kiselev}
\address{Department of Mathematical Sciences, University of Bath, Claverton Down, Bath, BA2 7AY, United Kingdom}
\email{alexander.v.kiselev@gmail.com, ak2084@bath.ac.uk}

\author{Kirill Ryadovkin}
\address{St. Petersburg Department of Steklov Mathematical Institute
of Russian Academy of Sciences, 27 Fontanka, 191023 St. Petersburg, Russia \emph{and}
St. Petersburg University, 7-9 Universitetskaya Embankment, 199034 St Petersburg, Russia }
\email{kryadovkin@gmail.com}

\begin{abstract}
  We consider an $\e$-periodic ($\e\to 0$) tubular structure, modelled as a magnetic Laplacian on a metric graph, which is periodic along a single axis. We show that the corresponding Hamiltonian admits norm-resolvent convergence to an ODE on $\mathbb{R}$ which is
  fourth order at  a discrete set of values of the magnetic potential (\emph{critical points}) and second-order generically. In a vicinity of critical points we establish a mixed-order asymptotics. The rate of convergence is also estimated. This represents a physically viable model of a phase transition as the strength of the (constant) magnetic field increases.

\end{abstract}

\maketitle

\section{Introduction}

The research aimed at modelling and engineering metamaterials has been recently brought to the forefront of materials science (see, {\it e.g.}, \cite{Phys_book} and references therein). It is widely acknowledged that these novel materials acquire non-classical properties as a result of a careful design of the microstructure,
which can be assumed periodic with a small enough period cell. The mathematical machinery involved in their modelling must therefore include as its backbone the theory of homogenisation (see
{\it e.g.} \cite{Lions, Bakhvalov_Panasenko, Jikov_book}), which aims at characterising limiting, or ``effective'', properties of small-period composites. A typical problem here is to study the asymptotic behaviour of solutions to equations of the type
\begin{equation}
-{\rm div}\bigl(A^\varepsilon(\cdot/\varepsilon)\nabla u_\e\bigr)-\omega^2u_\e=f,\ \ \ \ f\in L^2({\mathbb R}^d),\quad d\ge1,\qquad \omega^2\notin{\mathbb R}_+,
\label{eq:generic_hom}
\end{equation}
where for all $\varepsilon>0$ the matrix $A^\varepsilon$ is $Q$-periodic, $Q:=[0,1)^d,$ non-negative symmetric, and may additionally be required to satisfy the condition of uniform ellipticity.

The results of {\it e.g.}, \cite{Tip_1998,Figotin_Schenker_2005,Tip_2006,Figotin_Schenker_2007b} relate this area of research to the analysis of the so-called time-dispersive media; this relationship was recently quantified in \cite{KCher,CEK,CherErKis,CEKN}.

A prominent example of a problem of the type \eqref{eq:generic_hom} is provided by the homogenisation theory in the uniformly strongly elliptic setting ({\it i.e.}, both $A^\e$ and $(A^\e)^{-1}$ are bounded uniformly in $\varepsilon$). Here one proves (see \cite{Zhikov_1989, BSu03} and references therein) the existence of a constant matrix $A^\hom$ such that solutions $u_\e$ to \eqref{eq:generic_hom} converge to $u_\hom$ satisfying
\begin{equation}
-{\rm div}\bigl(A^\hom \nabla u_\hom\bigr)-\omega^2u_\hom=f,
\label{eq:generic_hom1}
\end{equation}
with the result also carrying over to vector models, including the Maxwell system.

It has been well understood that under some non-restrictive additional conditions the effective parameters $A^{\rm hom}$ in  (\ref{eq:generic_hom1}) are given by the leading-order term at the zero energy $\lambda=0$ of the energy-quasimomentum dispersion relation
$\lambda_1=\lambda_1^\varepsilon(
\varkappa
)=
A^{\rm hom}\varkappa\cdot\varkappa
+O(\varkappa^3),$ $\varkappa\to0,$ for the first eigenvalue in the problem
\begin{equation}
-\bigl(\nabla+{\rm i}
\varkappa\bigr)\cdot A^\varepsilon\bigl(\nabla+{\rm i}
\varkappa
\bigr)u=\lambda u,\ \ \ \ u\in L^2(Q),\ \ \ \varkappa\in\bigl[0,2\pi\bigr)^d,
\label{fibre_equation}
\end{equation}
with respect to the scaled variable $y=x/\varepsilon\in Q,$ so that $A^\varepsilon=A^\varepsilon(y),$ and the gradient $\nabla$ in (\ref{fibre_equation}) is taken with respect to
$y.$   The direct fibre decomposition into problems (\ref{fibre_equation}), followed by a perturbation analysis of its eigenvalue
$\lambda_1^\varepsilon(\varkappa)$
in each fibre,
allows one to obtain sharp operator-norm resolvent convergence estimates for the problem (\ref{eq:generic_hom}), see \cite{Zhikov_1989}, \cite{BSu03}.

It is notable however that the fact that $\lambda_1^\varepsilon(\varkappa)$ is quadratic in $\varkappa$ at the lower edge of the spectrum does not necessarily hold in the problems associated with the \emph{magnetic} Schrodinger operator. As observed in \cite{Shterenberg}, for a specially constructed magnetic potential and in the setting of $\mathbb{R}^2$ one can, for a strong enough magnetic field, face a situation where the asymptotic expansion of $\lambda_1^\varepsilon(\varkappa)$ for small $\varkappa$ starts with a fourth-order term. The general argument explained above then seems to suggest that the effective operator of \eqref{eq:generic_hom1} must therefore be replaced by a fourth-order differential operator.

The model of \cite{Shterenberg} can thus be viewed as that exhibiting a phase transition from a differential operator of the second order to the one of the fourth order as the magnetic potential reaches a certain threshold. We mention that the example considered in \cite{Shterenberg} can hardly be seen as physical as the assumption that the magnetic potential has a jump over a curve in the plane seems to equate to the consent for magnetic monopoles or equally extravagant objects.

In the present paper we therefore construct an explicitly solvable model exhibiting precisely the same type of a phase transition. The model is in fact a tubular $\e$-periodic structure akin to a carbon nanotube, periodic along a single axis. We are then able to not only study the phase transition phenomenon itself, but also to ascertain that under some non-restrictive geometric assumptions the magnetic potential required can be obtained by switching on a \emph{constant} magnetic field.

Thin networks are customarily modelled by quantum graphs, which is based on the norm-resolvent convergence results pertaining to Laplacians on thin manifolds, as studied in full details in \cite{Postbook}, see also references therein; cf. \cite{Pavlov_2008,MPP} where a related problem formulated in terms of scattering theory is considered and \cite{antiPost} where an alternative approach to the problem is pursued. In the present paper, we duly rely on this research in selecting a quantum graph equipped with magnetic Laplacian operators on its edges as a realistic mathematical model of a thin network. The matching conditions at the graph vertices are chosen to be Kirchhoff, or standard ones (see \cite{Berkolaiko}). This means that the original thin manifold has non-resonant vertex volumes, i.e., the ratio of the volume of vertex parts of the manifold to that of its edge parts is assumed to vanish as the manifold ``converges'' to the metric graph. The resonant case, where the mentioned ratio is bounded below, will be treated elsewhere.

We further clarify that as the period $\e$ of the periodic tubular structure tends to zero, in order to attain the desired effect one has to allow, perfectly in line with \cite{Shterenberg}, the magnetic potential to scale accordingly, i.e., it has to grow as $\e^{-1}$. Under this assumption, the lower edge of the spectrum of the operator family considered grows  to $+\infty$ as $\e^{-2}$, which is the so-called high-frequency homogenisation regime in terms of \cite{Craster}, see also pioneering works \cite{BirmanHF1,BirmanHF2}. Moreover, the techniques developed in this paper are applicable with no significant modifications in the treatment of the homogenisation problem in a vicinity of an internal edge of the spectrum (using the terminology of \cite{BirmanHF1,BirmanHF2}), termed as the high-frequency homogenisation regime in \cite{Craster}. Therefore the present paper introduces an alternative approach in the mentioned area, permitting to prove norm-resolvent convergence with an explicitly controlled rate as opposed to the approach of \cite{Craster} on the one hand, and suitable for non-factorizable operator families as opposed to that of \cite{BirmanHF1,BirmanHF2} on the other. The argument we develop here can also be extended by constructing an explicit functional model of the operators of the class considered, paving the way to the study of the phase transition in terms of the spectral function. This will be done in a separate publication; here we will be concerned exclusively with norm-resolvent convergence to the effective (homogenised) operator.

The paper is organised as follows. In Section \ref{sec:Preliminaries} we introduce the notation pertaining to metric graphs and define the operator family under consideration. In Section \ref{sec:Fibre_representation} the fibre representation of the operator is constructed based on a convenient form of Gelfand transform. In Section \ref{sec:Boundary_triple} the apparatus of the boundary triples theory is recalled and applied to the operators on graphs. Section \ref{sec:Model_graph} introduces the model graph we analyse in the sequel. The intended simplicity of this model allows for a complete spectral analysis of the associated operator, which we have elected to include as well. Further, Section \ref{sec:Rescaling} rescales the model graph, yielding the $\e$-periodic structure to be homogenised next. Section \ref{sec:Degeneracy} then treats the problem of degenerate band edges, culminating in the complete description of all magnetic potentials leading to degeneracy. Section \ref{sec:Perturbation} provides asymptotics of the $M$-matrix, which is then used in Section \ref{sec:Homogenisation} to prove the main result of the paper. The latter is Theorem \ref{thm:MAIN} and in a simplified form reads:
\begin{theorem*}
There exists a periodic sequence $\mathscr A$ of values of the magnetic potential which are critical in the following sense. Let $A\in \mathscr A$,
 $A'=A+\delta$ and $z'_\e$ be the lower edge of the spectrum of the self-adjoint magnetic Laplacian ${\Delta}_\e(A'/\e)$ in the graph space $L^2(\mathcal{G}^\e_1)$  describing the tubular structure (see details in Section 5).

Let $\mathcal{A}_-$ and $\mathcal{A}_+$ be self-adjoint operators in $L^2(\mathbb R)$ defined by the differential expressions
$$
\kappa_2^2  \delta \frac{ d^2}{dx^2}+\kappa_4^2 \frac{d^4}{dx^4}
\quad \text{and}\quad
\kappa_4^2 \frac{d^4}{dx^4},
$$
respectively, where $\kappa_2$ and $\kappa_4$ are explicitly computed real constants.

Then the following estimates hold uniformly in $z\in K_\sigma$, a compact set in $\mathbb{C},$ $\sigma$-uniformly separated from the real line, with a unitary operator $\Phi:L^2(\mathcal{G}^\e_1)\mapsto L^2(\mathbb{R})$:
  $$
  \|({\Delta}_\e(A'/\e)-(z'_\e+z))^{-1}-\Phi^*(\mathcal A_{-}-2\kappa_0^{-2}z)^{-1}\Phi\|= O((\min\{\varepsilon^{1/2},\varepsilon/|\delta|\})+O(|\delta|),\quad \delta\leqslant0,
  $$
and
  $$
  \|({\Delta}_\e(A'/\e)-(z'_\e+z))^{-1}-\Phi^*(\mathcal A_{+}-2\kappa_0^{-2}z)^{-1}\Phi\|= O(\varepsilon^{1/2}),\quad \delta=O(\e^2)>0.
  $$
Here $\kappa_0$ is an explicitly computed real constant.

\end{theorem*}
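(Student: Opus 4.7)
The plan is to reduce the study of the full operator to the analysis of its fibres via the Gelfand transform developed in Section \ref{sec:Fibre_representation}. Writing $\Delta_\e(A'/\e)=\int^\oplus \Delta_\e(A',\varkappa)\,d\varkappa$ with quasimomentum $\varkappa$ in a scaled Brillouin zone, the lower edge $z'_\e$ of the spectrum is attained at a specific quasimomentum $\varkappa_0$ determined by $A'$, and norm-resolvent convergence for the spectral parameter $z'_\e+z$ with $\im z\ne 0$ reduces to a uniform-in-$\varkappa$ analysis of the lowest band $\lambda_1^\e(\varkappa)$ in a neighbourhood of its minimum.

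For each fibre, I would express the resolvent of $\Delta_\e(A',\varkappa)$ via the Krein-type formula supplied by the boundary triples apparatus of Section \ref{sec:Boundary_triple}, so that all relevant spectral information is encoded in the $M$-matrix $M_\e(A',\varkappa;\lambda)$. The critical condition $A\in \mathscr{A}$ and the degeneracy analysis of Section \ref{sec:Degeneracy} identify exactly the values of the magnetic potential at which the quadratic-in-$\varkappa$ term in the expansion of $\lambda_1^\e(\varkappa)$ at its band minimum vanishes, so that the leading behaviour starts at the fourth order. Combining this with the $M$-matrix asymptotics of Section \ref{sec:Perturbation}, after rescaling $\varkappa\mapsto\varkappa/\e$ and $\lambda\mapsto z'_\e+z$, one should obtain an expansion of the form
$$
M_\e(A+\delta,\varkappa/\e;z'_\e+z)=\kappa_0^2\bigl(\tfrac12\kappa_2^2\,\delta\,\varkappa^2+\tfrac12\kappa_4^2\,\varkappa^4-z\bigr)+R_\e(\varkappa,\delta,z),
$$
with a remainder $R_\e$ controlled uniformly in the rescaled Brillouin zone.

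To promote this fibrewise expansion to the norm-resolvent statement, I would invert the above relation and compare it with the resolvent of $\kappa_2^2\delta\,d^2/dx^2+\kappa_4^2\,d^4/dx^4$ on $L^2(\mathbb{R})$ via the natural unitary $\Phi$ that intertwines the Gelfand transform with the Fourier transform on the line. The standard device is to split the quasimomentum integration into a low-frequency region where the leading symbol dominates and the resolvent difference is controlled by $R_\e$, and a high-frequency region where both resolvents are small in norm thanks to the ellipticity of the quartic term. Balancing the two contributions against the cutoff produces the $O(\e^{1/2})$ rate; for $\delta=O(\e^2)>0$ the band minimum remains pinned at $\varkappa_0$ after rescaling and this direct splitting suffices.

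The main obstacle is the mixed-regime estimate for $\delta\le 0$. Here $\kappa_2^2\delta\,\varkappa^2$ and $\kappa_4^2\varkappa^4$ have the same sign, a genuine shift $\varkappa\sim|\delta|^{1/2}/\kappa_4$ appears at which the two terms balance, and this shift competes with the homogenisation cutoff $\varkappa\sim\e^{-1/2}$. A careful rearrangement of the resolvent difference separating these two scales is needed to produce the $\min\{\e^{1/2},\e/|\delta|\}$ factor; the additional $O(|\delta|)$ term absorbs the error made in identifying the true lower edge $z'_\e$ uniformly in $\delta$, and requires tracking the dependence of the critical quasimomentum on the perturbation with the same precision as the $M$-matrix expansion itself.
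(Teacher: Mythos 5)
Your proposal follows essentially the same route as the paper: fibre decomposition via the Gelfand transform, the Krein formula from the boundary-triple framework reducing everything to the $M$-matrix, the degeneracy analysis identifying the critical potentials, the mixed $\delta\tau^2+\tau^4$ symbol with region-splitting in quasimomentum to get the $\min\{\e^{1/2},\e/|\delta|\}$ and $O(\e^{1/2})$ rates, and a unitary identification of the resulting rank-one leading term with an ODE on $\mathbb{R}$. The only sketch-level imprecision is writing the $2\times2$ $M$-matrix as a scalar symbol; in the paper this step is the explicit rank-one reduction (projection onto the eigenvector $\nu_+$, with $\|\gamma(z'_\e)\mathcal{P}(\tau)\|^2\to\kappa_0^2$ and the fibre-wise gauge $U_\tau$ feeding into the inverse Gelfand transform on the $\e$-chain), which is exactly what your appeal to the Section \ref{sec:Perturbation} asymptotics supplies.
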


\section{Preliminaries}\label{sec:Preliminaries}

\subsection{Periodic graphs}
Let $\mathcal{G}=(\mathcal{V},\mathcal{E})$ be a oriented connected graph (the cases of loops and multiple edges are not excluded). Here
 $\mathcal{V}$ is the set of its vertices, whereas  $\mathcal{E}$ denotes the set of its oriented edges, i.e., the set of ordered pairs of vertices belonging to $\mathcal{V}$,
\[
\mathcal{E}=\{(v,u) \,\vert\, v,u\in\mathcal{V}\}.
\]
We will further assume that the graph  $\mathcal{G}$ is embedded into  $\mathbb{R}^{d_0}$ for some  $d_0\geqslant 1$, i.e., $\mathcal{V}\subset\mathbb{R}^{d_0}$.
Each of the edges $e\in\mathcal{E}$ is represented as an interval  $[0,l_e]$ of a given length $l_e\in(0,+\infty)$. One might consider such graph as a collection $\mathcal{V}$ of points in $\mathbb{R}^{d_0}$, which correspond to the vertices, some of which are connected by smooth curves, which correspond to its edges. Since any graph embedded into  $\mathbb{R}^{d_0}$  is also naturally embedded into  $\mathbb{R}^{d_1}$ for all $d_1>d_0$, we will assume without loss of generality that the smooth curves corresponding to the graph edges have no other common points than the graph vertices.

In the present paper we will only consider \textbf{locally finite} and \textbf{periodic} for some $1\leq d\leq d_0$ graphs in $\mathbb{R}^d$. Namely, a graph  $\mathcal{G}=(\mathcal{V},\mathcal{E})$ is called \textbf{locally finite}, if any bounded set  $K\subset\mathbb{R}^{d_0}$ contains at most a finite number of vertices from $\mathcal{V}$ and the multiplicity (i.e., the total number of incoming and outgoing edges) of all vertices is finite.
Let $g_1,\ldots,g_d$ be a basis in $\mathbb{R}^d$. We will refer to the set
\begin{equation*}
\Gamma=\{g\in\mathbb{R}^d \,\vert\, g=\sum\limits_{j=1}^{d}n_jg_j, \,\,\, n_j\in\mathbb{Z} \,\,\, \forall j=1,\ldots,d \}
\end{equation*}
as the lattice $\Gamma$, generated by the basis $g_1,\ldots,g_d$.
A graph $\mathcal{G}=(\mathcal{V},\mathcal{E})$ is called  \textbf{periodic} in $\mathbb{R}^d$ with periods $g_1,\ldots,g_d$, if for any  $g\in\Gamma$ the graphs $\mathcal{G}$ and $\mathcal{G}+g$ coincide. Here $\mathcal{G}+g$ is understood as the graph $\mathcal{G}$, shifted by the vector $g$
\begin{align*}
\mathcal{G}+g&=\{\mathcal{V}+g,\mathcal{E}+g\},
\\
\mathcal{V}+g&=\{v\,\vert\,v-g\in\mathcal{V}\},
\\
\mathcal{E}+g&=\{(v,u)\,\vert\, (v-g,u-g)\in\mathcal{E}\}.
\end{align*}

Denote by $\mathcal{C}$ the parallelepiped, defined by the vectors $g_1,\ldots,g_d$
\[
\mathcal{C}=\{x\in\mathbb{R}^d \,\vert\, x=\sum\limits_{j=1}^d x_j g_j, \,\,\, 0\leqslant x_j < 1 \,\,\, \forall j=1,\ldots,d \}.
\]
Let $\widetilde{\mathcal{V}}$ be the set of vertices of the graph  $\mathcal{G}$ which belong to this parallelepiped,
\[
\widetilde{\mathcal{V}}=\{ \widetilde{v}\in \mathcal{V}: \widetilde{v}\in\mathcal{C} \}.
\]
Then any vertex  $v\in \mathcal{V}$ admits a unique representation
\[
v=\widetilde{v}+g_v, \qquad \widetilde{v}\in \widetilde{\mathcal{V}}, \quad g_v\in\Gamma.
\]
The sets $\mathcal{C}$, $\widetilde{\mathcal{V}}$ are commonly referred to as the elementary cell and fundamental vertex set, respectively.
We denote by $\widetilde{\mathcal{G}}$ the graph $\mathcal{G}$ factored over the translation group by vectors from $\Gamma$. We will henceforth identify the vertices of the graph $\widetilde{\mathcal{G}}$, being equivalence classes of vertices of  $\mathcal{V}$, with the set  $\widetilde{\mathcal{V}}$ containing precisely one member of each such class.

We further denote by  $\widetilde{\mathcal{E}}$ the set of edges of the graph $\widetilde{\mathcal{G}}$. The edge $\widetilde{e}=(\widetilde{v},\widetilde{u})$ belongs to the set $\widetilde{\mathcal{E}}$ if and only if there exists $g_u\in\Gamma$ such that the graph $\mathcal{G}$ contains the edge $(\widetilde{v},\widetilde{u}+g_u)$. The multiplicity of the edge $\widetilde{e}=(\widetilde{v},\widetilde{u})$ is the sum of multiplicities of the edges $e=(\widetilde{v},\widetilde{u}+g_u)$ over all $g_u\in\Gamma$. We remark that even when the graph $\mathcal{G}$ contains no multiple edges, such edges might appear in its fundamental graph
$\widetilde{\mathcal{G}}$.

In what follows, $\widetilde{e}$ will denote the edge of the graph $\widetilde{\mathcal{G}}$ such that the edge  $e$ of the graph $\mathcal{G}$ belongs to its equivalence class. As in the case of the set of vertices of the fundamental graph, it would be convenient to us to identify its edges with some subset of mutually non-equivalent edges of the original graph. We resort to this abuse of notation in one case only, namely, if the edge $\widetilde{e}=(\widetilde{v},\widetilde{u})$ of the graph $\widetilde{\mathcal{G}}$ corresponds to the edge $e=(\widetilde{v},\widetilde{u}+g_u)$ of the graph $\mathcal{G}$, then the notation $\widetilde{e}+g$ will be understood as representing the edge $(\widetilde{v}+g,\widetilde{u}+g_u+g)$ of the graph $\mathcal{G}$. This corresponds to the identification of the edge $\widetilde{e}$ of the graph $\widetilde{\mathcal{G}}$ with the edge $e$ of $\mathcal{G}$ belonging to its equivalence class and originating in the fundamental vertex set.

We will assume that the length of  $\widetilde{e}$ on the graph $\widetilde{\mathcal{G}}$ is equal to $l_e$, the length of the edge  $e$ on the graph $\mathcal{G}$.
Let $e=(v,u)$. We will denote by $g_e$ the following vector of the lattice $\Gamma$:
\begin{equation*}
\label{eq:dge}
g_e=g_u-g_v.
\end{equation*}
We remark that the vector $g_e$ is invariant for all edges in the equivalence class $\widetilde{e}$.

\medskip

\subsection{The operator $\Delta(A)$}
For a sufficiently smooth function $f$ defined on the graph $\mathcal{G}$ we denote its restriction to the edge  $e$ by $f_e$. Let  $W^k_2(\mathcal{G})$ be the Sobolev space associated with the graph $\mathcal{G}$, i.e., the space of functions which on each edge have square summable distributional derivatives up to the order $k$. The (standard) norm in this space is
\[
\Vert f \Vert^2_{W^k_2(\mathcal{G})}=\sum\limits_{e\in\mathcal{E}}\Vert f_e \Vert^2_{W^k_2(0,l_e)}=\sum\limits_{e\in\mathcal{E}}\sum\limits_{j=0}^{k}\int\limits_{0}^{l_e}\vert f^{(j)}_e(x) \vert^2\,dx.
\]

We will say that a function $f$ is continuous on the graph $\mathcal{G}$ if it is continuous on each of its edges and at each vertex $v\in\mathcal{V}$ there exists a value $f(v)$ such that for all $u$ and $w$ satisfying $(u,v), (v,w)\in\mathcal{E}$ the following equality holds:
\[
f(v)=f_{(u,v)}(l_e)=f_{(v,w)}(0),
\]
i.e., the value of the function at $v$ coincides with the values of it at the origins of outgoing and at the ends of incoming edges.

Let now $b,A$ be two bounded functions on  $\mathcal{G}$, periodic with respect to the lattice  $\Gamma$, i.e., $b_{e+g}=b_e$, $A_{e+g}=A_e$ for all  $g\in\Gamma$.
Assuming $f\in W_2^2(\mathcal{G})$, each of the functions $f_e$ has an absolutely continuous derivative. Denote by $\partial^A_n f_e(v)$ the co-normal derivative of the function $f$ at the vertex $v\in\mathcal{V}$ along the edge $e\in\mathcal{G}$,
\begin{equation*}
\label{eq:ddn}
\partial^A_n f_e(v)=
\begin{cases}
b^2_e(0)\big(f'(0)+i A_e(0)f(0)\big), &\quad e=(v,u);
\\
-b^2_e(l_e)\big(f'(l_e)+i A_e(l_e)f(l_e)\big), &\quad e=(u,v);
\\
0,\quad &\quad\text{otherwise}.
\end{cases}
\end{equation*}

In order not to overcomplicate the exposition with technical details, we will henceforth assume that $A_e,\ b_e\in C^1(e)$  and that the weights $b_e$ are non-degenerate, i.e., $b_e\ge c_0>0$ with an independent constant $c_0$, for each edge $\e\in\mathcal E$.

By $\Delta(A)$ we denote the operator corresponding to the differential expression
\begin{equation}
\label{eq:dDelta}
(\Delta(A)f)_e(x)=-\Big(\frac{d}{dx}+i A_e(x)\Big)b^2_e(x)\Big(\frac{d}{dx}+i A_e(x)\Big)f_e(x),
\end{equation}
acting on the domain
\begin{align*}
\label{eq:ddomA}
\dom \Delta(A) = \{f\in W_2^2(\mathcal{G}) \,\vert\, \, \text{ $f$ is continuous }
\text{on $\mathcal{G}$ and} \sum\limits_{e\in\mathcal{E}}\partial^{A}_n f_e(v)=0 \text{ for every } v\in\mathcal{V}\}.
\end{align*}
It is easily checked that this operator is self-adjoint in $L^2(\mathcal{G}):=W_2^0(\mathcal{G})$; this is also an immediate corollary of the general theory, see Section \ref{sec:Boundary_triple} below.

\medskip

\section{Fibre representation}\label{sec:Fibre_representation}
Let the basis $\{\widetilde{g}_j\}_{j=1}^d$ be dual to  $\{g_j\}_{j=1}^d$, i.e.,
\begin{equation*}
\langle \widetilde{g}_i , g_j \rangle = 2 \pi \delta_{ij}.
\end{equation*}
This basis gives rise to the lattice $\wt{\Gamma}$, dual to  $\Gamma$:
\begin{equation*}
\wt{\Gamma}=\{g\in\mathbb{R}^d \,\vert\, g=\sum\limits_{j=1}^{d}n_j\wt{g}_j, \,\,\, n_j\in\mathbb{Z} \,\,\, \forall j=1,\ldots,d \}.
\end{equation*}
We pick the first Brillouin zone (see, e.g., \cite{BSu03}) as an elementary cell of the dual lattice:
\begin{equation*}
\widetilde{{\mathcal C}}=\{t\in\mathbb{R}^d : \vert t \vert \leq \vert t-g \vert, \,\, 0\not=g\in\wt{\Gamma}\}.
\end{equation*}
We remark that as an elementary cell of the dual lattice one could pick any set which is fundamental for the latter (see \cite{BSu03}). In particular, the simplest, albeit not the most natural, choice is
\begin{equation*}
\{t\in\mathbb{R}^d : t=\sum\limits_{j=1}^{d}t_j\widetilde{g}_j,
\,\, 0\leqslant t_j < 1,\,j=1,\ldots,d\}.
\end{equation*}

Denote by $\mathcal{L}$ the direct integral of Hilbert spaces   $L_2(\widetilde{\mathcal{G}})$ over the dual cell $\widetilde{\mathcal{C}}$,
\[
\mathcal{L}=\int\limits_{\widetilde{\mathcal{C}}}\oplus L_2(\widetilde{\mathcal{G}})\,dt.
\]
This is a space of functions of two variables, $x\in\widetilde{\mathcal{G}}$ and $t\in\widetilde{\mathcal{C}}$, such that for a.a. $t$ they belong to $L_2(\widetilde{\mathcal{G}})$ as functions of $x$. The norm in this space is defined by
\[
\Vert f \Vert_{\mathcal{L}}^2=\int\limits_{\widetilde{\mathcal{C}}}\Vert f \Vert_{L_2(\widetilde{\mathcal{G}})}^2\,dt.
\]

Let $L_2^0({\mathcal{G}})$ be the set of compactly supported functions from  $L_2({\mathcal{G}})$.
We define the operator $U:L_2^0({\mathcal{G}})\to\mathcal{L}$ as follows:
\begin{equation}
\label{eq:dU}
(Uf)_{\widetilde{e}}(x,t)=\frac{1}{\sqrt{\vert \widetilde{\mathcal{C}}\vert}}\sum\limits_{g\in\Gamma}f_{\wt{e}+g}(x)e^{-i\langle g,t \rangle} e^{-\frac{ix}{l_e}\langle g_e,t\rangle}.
\end{equation}
We recall that this definition does not depend on which particular edge $e$ of the equivalence class $\widetilde{e}$ is picked. Operators of this type are commonly referred to as Gelfand transforms.
 \begin{lemma}
The operator $U$ can be extended by continuity to a unitary operator acting from $L_2({\mathcal{G}})$ to $\mathcal{L}$.
\end{lemma}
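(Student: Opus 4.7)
The statement is a standard unitarity result for a Gelfand-type transform, and the plan is to verify (i) $U$ is an isometry on the dense subset $L_2^0(\mathcal{G})\subset L_2(\mathcal{G})$ and (ii) its extension is surjective. The only nonstandard ingredient is the extra phase factor $e^{-ix\langle g_e,t\rangle/l_e}$, and the first thing to notice is that since this factor depends on $x,t,\widetilde{e}$ but \emph{not} on $g$, it has modulus one and factors out of any summation over $g$ under the absolute value. Consequently it is invisible to the isometry check.

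Step 1 (Isometry). Fix $f\in L_2^0(\mathcal{G})$, so that for each $\widetilde{e}\in\widetilde{\mathcal{E}}$ only finitely many lattice translates $\widetilde{e}+g$ lie in the support of $f$ and the series in \eqref{eq:dU} is a finite sum. Using $|\widetilde{\mathcal{C}}|^{-1/2}\{e^{-i\langle g,t\rangle}\}_{g\in\Gamma}$ as an orthonormal basis of $L^2(\widetilde{\mathcal{C}})$ (standard Fourier series on the torus dual to $\Gamma$) and the observation above, I compute
\begin{align*}
\Vert Uf\Vert_{\mathcal{L}}^2
&=\sum_{\widetilde{e}\in\widetilde{\mathcal{E}}}\int_0^{l_e}\int_{\widetilde{\mathcal{C}}}\frac{1}{|\widetilde{\mathcal{C}}|}\Bigl|\sum_{g\in\Gamma} f_{\widetilde{e}+g}(x)\,e^{-i\langle g,t\rangle}\Bigr|^2\,dt\,dx\\
&=\sum_{\widetilde{e}\in\widetilde{\mathcal{E}}}\int_0^{l_e}\sum_{g\in\Gamma}|f_{\widetilde{e}+g}(x)|^2\,dx
=\sum_{e\in\mathcal{E}}\int_0^{l_e}|f_e(x)|^2\,dx=\Vert f\Vert_{L_2(\mathcal{G})}^2,
\end{align*}
where the last equality uses that as $\widetilde{e}$ ranges over $\widetilde{\mathcal{E}}$ and $g$ over $\Gamma$, the edges $\widetilde{e}+g$ enumerate $\mathcal{E}$ exactly once (a consequence of the unique decomposition $v=\widetilde{v}+g_v$ explained in the preliminaries). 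Since $L_2^0(\mathcal{G})$ is dense in $L_2(\mathcal{G})$, $U$ extends by continuity to an isometry on $L_2(\mathcal{G})$.

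Step 2 (Surjectivity). I would write down the candidate inverse explicitly: for $F\in\mathcal{L}$ with $F(\cdot,t)\in L_2(\widetilde{\mathcal{G}})$ smooth and compactly supported in $t$, define
\[
(VF)_{\widetilde{e}+g}(x)=\frac{1}{\sqrt{|\widetilde{\mathcal{C}}|}}\int_{\widetilde{\mathcal{C}}}F_{\widetilde{e}}(x,t)\,e^{i\langle g,t\rangle}\,e^{\frac{ix}{l_e}\langle g_e,t\rangle}\,dt,
\]
which is well-defined on each edge $\widetilde{e}+g\in\mathcal{E}$ (invoking the identification from the preliminaries). A direct computation analogous to Step 1, relying once more on the Parseval identity in $L^2(\widetilde{\mathcal{C}})$ (now in the opposite direction), shows $VU=I$ on $L_2^0(\mathcal{G})$ and $UV=I$ on the smooth-in-$t$ dense subset of $\mathcal{L}$; the phase $e^{ix\langle g_e,t\rangle/l_e}$ inserted in $V$ exactly cancels its counterpart in $U$. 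Density of both subspaces plus the isometry from Step 1 then upgrades $V$ to the bounded two-sided inverse of $U$, and $U$ is unitary.

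The only conceptual point worth flagging is bookkeeping: one has to verify that the enumeration $\mathcal{E}=\{\widetilde{e}+g:\widetilde{e}\in\widetilde{\mathcal{E}},g\in\Gamma\}$ is a bijection under the non-standard convention $\widetilde{e}+g:=(\widetilde{v}+g,\widetilde{u}+g_u+g)$ declared for the edges, but this is immediate from uniqueness of the decomposition $v=\widetilde{v}+g_v$ applied to the origin of each edge of $\mathcal{G}$. No other obstruction arises; the phase $e^{-ix\langle g_e,t\rangle/l_e}$, while crucial for the subsequent fibre decomposition of $\Delta(A)$ (it is what makes the fibres act on the \emph{fixed} space $L_2(\widetilde{\mathcal{G}})$ with $t$-dependent boundary data rather than on $t$-dependent domains), is inert as far as unitarity of $U$ is concerned.
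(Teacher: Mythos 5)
Your proposal is correct and follows essentially the same route as the paper: isometry on compactly supported functions via orthogonality of the characters $e^{-i\langle g,t\rangle}$ over $\widetilde{\mathcal{C}}$ (with the $g$-independent phase $e^{-\frac{ix}{l_e}\langle g_e,t\rangle}$ factoring out), followed by the explicit formula for the inverse --- your $V$ is precisely the paper's $U^*$ --- with the Parseval identity in $t$ supplying surjectivity. The paper phrases the second step as showing that $U^*$ is an isometry rather than verifying $VU=I$ and $UV=I$ on dense subsets, but this is only a cosmetic difference in how the same computation is organised.
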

\begin{proof}
Pick a function $f\in L_2^0({\mathcal{G}})$, i.e., a square summable on $\mathcal{G}$ function with compact support in $\mathbb{R}^{d_0}$. The norm of $Uf$ admits the representation
\begin{align*}
\Vert Uf \Vert^2_{\mathcal{L}}&=\frac{1}{\vert \widetilde{\mathcal{C}}\vert}\sum\limits_{\widetilde{e}\in\widetilde{\mathcal{E}}}\int\limits_{\widetilde{\mathcal{C}}}\int\limits_{0}^{l_e} \sum\limits_{g_1\in \Gamma}\sum\limits_{g_2\in \Gamma} f_{\wt{e}+g_1}(x) \overline{f_{\wt{e}+g_2}(x)} e^{-i\langle g_1-g_2,t \rangle}\, dx \,dt
\\&=
\sum\limits_{\widetilde{e}\in\widetilde{\mathcal{E}}}\int\limits_{0}^{l_e}\sum\limits_{g\in \Gamma}f_{e+g}(x) \overline{f_{e+g}(x)} \, dx=\sum\limits_{e\in\mathcal{E}}\Vert f_e \Vert^2_{L_2(0,l_e)}=\Vert f \Vert^2_{L_2(\mathcal{G})}.
\end{align*}
Hence, $U$ is extended by continuity to an isometry from  $L_2({\mathcal{G}})$ to $\mathcal{L}$.

The operator adjoint to $U$ can be defined by the equality
\[
(U^*h)_{\wt{e}+g}(x)=\frac{1}{\sqrt{\vert\widetilde{\mathcal{C}}\vert}}\int\limits_{\widetilde{\mathcal{C}}}h_{\widetilde{e}}(x,t)e^{i\langle g,t \rangle}e^{\frac{ix}{l_e}\langle g_e,t\rangle}\,dt
\]
for any $h\in\mathcal{L}$.
Note that for a fixed  $x$ the integral in the latter expression is precisely a coefficient of the Fourier series for $h_{\widetilde{e}}(x,t) e^{\frac{ix}{l_e}\langle g_e,t\rangle}$. The Parceval identity then yields, for a.a. $x\in(0,l_{e})$,
\[
\sum\limits_{g\in\Gamma}\Big\vert\int\limits_{\widetilde{\mathcal{C}}}h_{\widetilde{e}}(x,t)e^{i\langle g,t\rangle}e^{\frac{ix}{l_e}\langle g_e,t\rangle}\,dt\Big\vert^2
=\vert \widetilde{\mathcal{C}}\vert\int\limits_{\widetilde{\mathcal{C}}}\vert h_{\widetilde{e}}(x,t)\vert^2\,dt.
\]
We therefore obtain for the norm of the function $U^*h$:
\begin{align*}
\Vert U^*h \Vert^2_{L_2(\mathcal{G})}&=\sum\limits_{e\in\mathcal{E}}\int\limits_{0}^{l_e}\vert (U^*h)_e(x)\vert^2\, dx
= \frac{1}{\vert \widetilde{\mathcal{C}}\vert}\sum\limits_{\widetilde{e}\in\widetilde{\mathcal{E}}}\sum\limits_{g\in\Gamma}\int\limits_{0}^{l_e} \Big\vert\int\limits_{\widetilde{\mathcal{C}}}h_{\widetilde{e}}(x,t)e^{i\langle g,t\rangle}e^{\frac{ix}{l_{e}}\langle g_e,t\rangle}\,dt\Big\vert^2 \, dx
\\&=
\sum\limits_{\widetilde{e}\in\widetilde{\mathcal{E}}}
\int\limits_{0}^{l_e}\int\limits_{\widetilde{\mathcal{C}}}\vert h_{\widetilde{e}}(x,t)\vert^2\,dt dx=
\int\limits_{\widetilde{\mathcal{C}}}\Vert h(\cdot,t)\Vert^2_{L_2(\widetilde{\mathcal{G}})}\,dt=\Vert h\Vert^2_{\mathcal{L}}.
\end{align*}
Thus, the operator adjoint to $U$ is an isometry and hence the operator $U$, extended to the space $L_2({\mathcal{G}})$ by continuity, is unitary.
\end{proof}

Denote by  $b_{\widetilde{e}},A_{\widetilde{e}}$ functions on the edge  $\wt{e}$ of the graph $\widetilde{\mathcal{G}}$ such that for any graph edge $e\in\mathcal{E}$ corresponding to $\wt{e}$ one has
\[
b_{\widetilde{e}}(x)=b_e(x), \qquad A_{\widetilde{e}}(x)=A_e(x).
\]
Then
\begin{equation}
\label{eq:wtaU}
(U b f)_{\widetilde{e}}(x,t)=b_{\widetilde{e}}(x)(Uf)_{\widetilde{e}}(x,t).
\end{equation}

Further, let $f\in W^2_2(\widetilde{\mathcal{G}})$.
Define $\widetilde{\partial}^A_n f_{\widetilde{e}}(\widetilde{v},t)$ by the following equality:
\begin{equation*}
\label{eq:ddwtn}
\widetilde{\partial}^A_n f_{\widetilde{e}}(\widetilde{v},t)=
\begin{cases}
b^2_{\widetilde{e}}(0)\big(f'_{\widetilde{e}}(0)+i\big(\frac{\langle g_e,t\rangle}{l_{e}}+A_{\widetilde{e}}(0)\big) f_{\widetilde{e}}(0)\big),
&\quad \widetilde{e}=(\widetilde{v},\widetilde{u});
\\
-b^2_{\widetilde{e}}(l_{e})\big(f'_{\widetilde{e}}(l_{e})+i\big(\frac{\langle g_e,t\rangle}{l_{e}} +A_{\widetilde{e}}(l_{e})\big)f_{\widetilde{e}}(l_{e})\big),
&\quad \widetilde{e}=(\widetilde{u},\widetilde{v});
\\
0,\quad &\quad\text{otherwise}.
\end{cases}
\end{equation*}
Let $\wt{\Delta}(A,t)$ be defined as the operator in $\mathcal{L}$ acting as
\begin{equation}
\label{eq:dwtA}
(\wt{\Delta}(A,t)f)_{\widetilde{e}}(x,t)=-\Big(\frac{d}{dx}+\frac{i}{l_{e}}\langle g_e,t\rangle+iA_{\wt{e}}(x)\Big)b^2_{\widetilde{e}}(x)\Big(\frac{d}{dx}+\frac{i}{l_{e}}\langle g_e,t\rangle+iA_{\wt{e}}(x)\Big) f_{\widetilde{e}}(x,t)
\end{equation}
on the domain
\begin{multline}
\label{eq:ddomwtA}
\dom \wt{\Delta}(A,t) = \{f\in W_2^2(\widetilde{\mathcal{G}}) \,\vert\, \, \text{the function $f$ is continuous } \\
\text{on the graph $\wtc{G}$, and } \sum\limits_{\wt{e}\in\wtc{E}}\wt{\partial}^A_n f_{\wt{e}}(\wt{v},t)=0 \text{ for every } \wt{v}\in  \wtc{G} \}.
\end{multline}

We have the following
\begin{theorem}
\label{th:direct}
The operator $\Delta(A)$ defined by \eqref{eq:dDelta} is unitary equivalent to the direct integral of the operators $\wt{\Delta}(A,t)$ defined by \eqref{eq:dwtA} on the domain \eqref{eq:ddomwtA},
\begin{equation}\label{eq:fiber}
U\Delta(A)U^{-1}=\int\limits_{\widetilde{\mathcal{C}}}\oplus \wt{\Delta}(A,t) \,d t.
\end{equation}
\end{theorem}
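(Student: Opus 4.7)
The plan is to verify directly that the unitary Gelfand transform $U$ intertwines $\Delta(A)$ with the constant-fibre operator $t\mapsto\wt{\Delta}(A,t)$. Since $U$ has been shown to be unitary in the preceding lemma, the whole content of \eqref{eq:fiber} is the identity $U\Delta(A)U^{-1}=\int^{\oplus}\wt{\Delta}(A,t)\,dt$ as operators in $\mathcal{L}$. I will prove this by establishing: (a) $U$ maps $\dom\Delta(A)\cap L_2^0(\mathcal{G})$ into $\int^{\oplus}\dom\wt{\Delta}(A,t)\,dt$, (b) the two differential expressions act identically on the image, and (c) both operators are self-adjoint, so equality on a core suffices.

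First, I would record the single calculation on which everything hinges: differentiating \eqref{eq:dU} edgewise one obtains
\begin{equation*}
\Bigl(\frac{d}{dx}+\frac{i\langle g_e,t\rangle}{l_e}\Bigr)(Uf)_{\widetilde{e}}(x,t)=\frac{1}{\sqrt{|\widetilde{\mathcal{C}}|}}\sum_{g\in\Gamma}f'_{\widetilde{e}+g}(x)\,e^{-i\langle g,t\rangle}\,e^{-\frac{ix}{l_e}\langle g_e,t\rangle}=(Uf')_{\widetilde{e}}(x,t),
\end{equation*}
i.e.\ the gauge phase $e^{-ix\langle g_e,t\rangle/l_e}$ built into $U$ is precisely what converts the Bloch quasimomentum back into an ordinary derivative. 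Combined with the multiplicative property \eqref{eq:wtaU} and the $\Gamma$-periodicity of $A_e$ and $b_e$, iterating this identity twice gives
\begin{equation*}
\wt{\Delta}(A,t)(Uf)_{\widetilde{e}}(x,t)=U\bigl(\Delta(A)f\bigr)_{\widetilde{e}}(x,t)
\end{equation*}
for every sufficiently smooth $f$; this is a term-by-term verification on the Gelfand series.

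Second, I would check the vertex conditions at a fixed $\widetilde{v}\in\widetilde{\mathcal{V}}$. For an outgoing edge $\widetilde{e}=(\widetilde{v},\widetilde{u})$ the phase factor evaluated at $x=0$ equals $1$, giving $(Uf)_{\widetilde{e}}(0,t)=|\widetilde{\mathcal{C}}|^{-1/2}\sum_g f(\widetilde{v}+g)e^{-i\langle g,t\rangle}$; for an incoming edge $\widetilde{e}'=(\widetilde{u}',\widetilde{v})$ the evaluation at $x=l_{e'}$ produces an additional $e^{-i\langle g_{e'},t\rangle}$, after which the substitution $g\mapsto g-g_{e'}$ restores the same sum. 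Hence continuity of $f$ at every $\widetilde{v}+g$ implies continuity of $(Uf)(\cdot,t)$ at $\widetilde{v}$ in $\widetilde{\mathcal{G}}$. The Kirchhoff sum is handled in the same way: the identity above shows that $\wt{\partial}^A_n(Uf)_{\widetilde{e}}(\widetilde{v},t)=|\widetilde{\mathcal{C}}|^{-1/2}\sum_g \partial^A_n f_{e+g}(\widetilde{v}+g)e^{-i\langle g,t\rangle}$, with the incoming-edge case again reduced via the $g_{e'}$ translation, and then the inner sum vanishes by the Kirchhoff condition for $f$ at each $\widetilde{v}+g$.

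Finally, I would pass from a core to the full operator. Choosing the dense subspace of $f\in\dom\Delta(A)$ with compact support in $\mathbb{R}^{d_0}$, the Gelfand sums reduce to finite ones so all manipulations are rigorous and yield $(U\Delta(A)f)(x,t)=\wt{\Delta}(A,t)(Uf)(x,t)$ for a.e.\ $t$. Since $\Delta(A)$ is self-adjoint in $L^2(\mathcal{G})$ (as noted in the text, also a consequence of Section~\ref{sec:Boundary_triple}) and each $\wt{\Delta}(A,t)$ is self-adjoint in $L^2(\widetilde{\mathcal{G}})$ with a measurable fibre structure in $t$, the direct integral on the right is self-adjoint, and equality on a core implies equality of the self-adjoint extensions. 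The only delicate point I anticipate is bookkeeping for incoming edges, where the extra phase $e^{-i\langle g_e,t\rangle}$ generated by the endpoint $x=l_e$ must be absorbed by a lattice shift in the summation index; once the conventions for $\widetilde{e}+g$ are used consistently this is essentially automatic, but it is the only spot where a sign or shift error would derail the argument.
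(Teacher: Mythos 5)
Your proposal is correct and follows essentially the same route as the paper: the key identity $(Uf')_{\widetilde{e}}=\bigl(\tfrac{d}{dx}+\tfrac{i}{l_e}\langle g_e,t\rangle\bigr)(Uf)_{\widetilde{e}}$, the transfer of the continuity and Kirchhoff conditions to the fibre operators, and the resulting intertwining of the differential expressions via \eqref{eq:wtaU}. Your additional bookkeeping for incoming edges and the explicit core-plus-self-adjointness closure argument only spell out steps the paper leaves implicit.
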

\begin{proof}
Let $f(\cdot)\in\dom \Delta(A)$. We will demonstrate that $(Uf)(\cdot,t)\in\dom \wt{\Delta}(A,t)$.
The continuity condition for $Uf(\cdot,t)$ at the vertices of the graph $\widetilde{\mathcal{G}}$ follows from the continuity of $f$ at the vertices of the graph $\mathcal{G}$ together with the definition \eqref{eq:dU} of the operator $U$. For the Gelfand transform  \eqref{eq:dU} of $f'$ we have
\begin{align*}
(Uf')_{\widetilde{e}}(x,t)&=\frac{1}{\sqrt{\vert \widetilde{\mathcal{C}}\vert}}\sum\limits_{g\in\Gamma}f'_{\wt{e}+g}(x)e^{-i\langle g,t \rangle} e^{-\frac{ix}{l_e}\langle g_e,t\rangle}
\\&=
\frac{1}{\sqrt{\vert \widetilde{\mathcal{C}}\vert}}\sum\limits_{g\in\Gamma}\Big(\frac{d}{dx}+\frac{i}{l_{e}}\langle g_e,t\rangle\Big)f_{\wt{e}+g}(x)e^{-i\langle g,t \rangle} e^{-\frac{ix}{l_e}\langle g_e,t\rangle}
\\&=
\Big(\frac{d}{dx}+\frac{i}{l_{e}}\langle g_e,t\rangle\Big) (Uf)_{\widetilde{e}}(x,t).
\end{align*}
Therefore, if  $f$ satisfies the second condition on the domain of $\Delta(A)$, $Uf$ satisfies the second condition on the domain of the operator $\wt{\Delta}(A,t)$.

A similar calculation for the second derivative together with \eqref{eq:wtaU} yields
\[
(U\Delta(A)f)(x,t)=\wt{\Delta}(A,t)(Uf)(x,t).
\]
\end{proof}

\section{The boundary triple}\label{sec:Boundary_triple}

We first recall the setup of the boundary triples theory, which we will use next.
The fundamentals of the boundary triples theory
are covered
in~\cite{BHS2020, Smudgen2012}, see also
references therein.
%

%

Denote by~$\mathcal A$ a closed and densely defined symmetric
operator on the separable Hilbert space $H$ with the
domain~$\dom\mathcal A$, having equal deficiency
indices~$0 < n_{+}(\mathcal A)= n_{-}(\mathcal A) \leq\infty$.

\begin{definition}[\cite{Kochubej}]\label{def:bonudaryTriple}
A triple $\{\mathcal{K}, \Gamma_0, \Gamma_1 \}$ consisting of an auxiliary
Hilbert space $\mathcal{K}$ and linear mappings $\Gamma_0, \Gamma_1$ defined
everywhere on  $\dom \mathcal A^*$ is called a \emph{boundary triple} for $\mathcal A^*$ if the following
conditions are satisfied:
\begin{enumerate}
\item
The abstract Green's formula is valid
    \begin{equation*}\label{GreenFormula}
        (\mathcal A^*f,g)_H - (f,\mathcal A^*g)_H = (\Gamma_1 f, \Gamma_0 g)_{\mathcal{K}} -
        (\Gamma_0 f, \Gamma_1 g)_{\mathcal{K}},\quad f,g \in \dom\mathcal A^*.
    \end{equation*}
\item
  For any $Y_0, Y_1 \in{\mathcal{K}}$ there exists $f \in
\dom\mathcal A^*$, such that $\Gamma_0 f = Y_0$, $\Gamma_1 f = Y_1$.
In other words,
the mapping~$f \mapsto \Gamma_0 f \oplus \Gamma_1 f $, $f \in \dom\mathcal A^*$
to ${\mathcal{K}}\oplus {\mathcal{K}}$ is surjective.
\end{enumerate}
\end{definition}
\noindent
It can be shown~(see \cite{Kochubej}) that a boundary
triple for~$\mathcal A^*$
exists assuming only~$ n_{+}(\mathcal A)=n_{-}(\mathcal A) $.
Note also that a boundary triple
is not unique.
Given any bounded self-adjoint operator~$\Lambda = \Lambda^*$
on $\mathcal K$, the
collection~$\{\mathcal K, \Gamma_0 , \Gamma_1 + \Lambda \Gamma_0\}$
is a
boundary triple for~$\mathcal A^*$ as well,
provided that~$\Gamma_1 + \Lambda \Gamma_0$
is surjective.

\begin{definition}\label{def:WeylFunction}
Let $\mathscr T = \{\mathcal{K}, \Gamma_0, \Gamma_1 \}$
be a boundary triple
of~$\mathcal A^*$.
The Weyl function of $\mathcal A^*$ corresponding to~$\mathscr T$ and
denoted $M(z)$, $z \in\mathbb C\setminus \mathbb R$, is an analytic
operator-function with a positive imaginary part for~$z \in \mathbb C_+$
(i.e., an operator $R$\nobreakdash-function) with values in
the algebra of bounded operators on $\mathcal K$ such that
\[
  M(z)\Gamma_0 f_z = \Gamma_1 f_z, \quad f_z \in {\rm ker}(\mathcal A^* -zI),
  \quad z \notin \mathbb R.
\]
For $z\in\mathbb C \setminus\mathbb R$ we
have~$(M(z))^* = (M(\bar z))$ and
$\im (z) \cdot \im(M(z)) > 0$.
\end{definition}

\begin{definition}
An extension~$\mathscr A$ of a closed densely defined
symmetric operator~$\mathcal A$ is
called \emph{almost solvable (a.s.)} and denoted $\mathscr A = A_B$ if
there exist a boundary
triple~$\{\mathcal{K}, \Gamma_0, \Gamma_1 \}$ for~$\mathcal A^*$
and a bounded operator~$B : \mathcal{K} \to \mathcal K$ defined
everywhere in $\mathcal K$ such that
\begin{equation*}
    f \in \dom A_B \iff  \Gamma_1 f =  B \Gamma_0 f
\end{equation*}
\end{definition}
\noindent
This definition implies the
inclusion~$\dom A_B \subset \dom \mathcal A^*$
and that~$A_B$ is a restriction of~$\mathcal A^*$ to
the linear set $\dom A_B := \{f \in \dom\mathcal A^* : \Gamma_1 f =
B \Gamma_0 f\}$.
In this context, the operator~$B$ plays the r\^ole of a
parameter for the
family of extensions~$\{A_B \mid B : \mathcal K \to \mathcal K\}$.
It can be shown (see~\cite{CherKisSilva1} for references)
that if the
deficiency indices $n_\pm(\mathcal A)$ are equal
and $A_B$ is an almost
solvable extension of~$\mathcal A$, then
the resolvent set of~$A_B$ is
not empty (i.e. $A_B$ is maximal),
both $A_B$ and $(A_B)^* = A_{B^*}$
are restrictions of $\mathcal A^*$ to their
domains, and
$A_B$ and $B$
are selfadjont (dissipative) simultaneously.
The spectrum of~$A_B$ coincides with the
set of points~$z_0 \in \mathbb C$ such that $(M(z_0) -B)^{-1}$
does not admit analytic continuation into it.

One of the cornerstones of our analysis is the celebrated Kre\u\i n formula, which allows to relate the resolvent of $A_B$  to the resolvent of a self-adjoint operator $A_\infty$ defined as the
restriction of the maximal operator $\mathcal A^*$ to the set
\begin{equation}\label{eq:DD}
{\rm dom}(A_\infty)=\bigl\{u\in {\rm dom\ }\mathcal A^*|\, \Gamma_0 u=0\bigr\}.
\end{equation}
(We follow \cite{Ryzhov} in using the notation $A_\infty,$ justified by the fact that in the language of triples this extension formally corresponds to $A_B$ with $B=\infty.$)


\begin{proposition}[Version of the Kre\u\i n formula of \cite{DM}]\label{prop:Krein}
Assume that $\{\mathcal{K},\Gamma_0,\Gamma_1\}$ is a boundary triple for the operator $\mathcal A^*$. Then for the resolvent  $(A_B-z)^{-1}$, where $B$ is a bounded operator in $\mathcal{K}$, one has, for all
$z\in \rho(A_B)\cap \rho(A_\infty)$:
\begin{multline}
\label{eq:resolvent}
(A_B-z)^{-1}=(A_\infty-z)^{-1}+ \gamma(z)\bigl(B-M(z)\bigr)^{-1}\gamma^*(\bar z)
\\
=(A_\infty-z)^{-1}+ \gamma(z)\bigl(B-M(z)\bigr)^{-1}\Gamma_1 (A_\infty-z)^{-1},
\end{multline}
where $M(z)$ is the  M-function of $\mathcal A^*$ with respect to the boundary triple $\{\mathcal{K},\Gamma_0,\Gamma_1\}$ and $\gamma(z)$ is the solution operator
$$
\gamma(z)=\bigl(\Gamma_0|_{\text{\rm ker\,}(\mathcal A^*-z)}\bigr)^{-1}.
$$
\end{proposition}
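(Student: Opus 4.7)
The plan is to derive Kre\u\i n's formula by exploiting the direct sum decomposition $\dom \mathcal A^* = \dom A_\infty \dotplus \ker(\mathcal A^*-z)$ that is implicit in the surjectivity axiom for the boundary triple, and then use the Green's identity to identify $\Gamma_1(A_\infty-z)^{-1}$ with $\gamma^*(\bar z)$.

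First I would fix $f\in H$ and $z\in\rho(A_B)\cap\rho(A_\infty)$, and set $u:=(A_B-z)^{-1}f$, $u_\infty:=(A_\infty-z)^{-1}f$. Both lie in $\dom\mathcal A^*$ and $(\mathcal A^*-z)u = (\mathcal A^*-z)u_\infty = f$, so $v:=u-u_\infty$ belongs to $\ker(\mathcal A^*-z)$. Since $\Gamma_0$ is a bijection from $\ker(\mathcal A^*-z)$ onto $\mathcal K$, the solution operator $\gamma(z)$ is well-defined and $v=\gamma(z)\Gamma_0 v$. Because $\Gamma_0 u_\infty=0$ by the definition \eqref{eq:DD} of $A_\infty$, one has $\Gamma_0 v=\Gamma_0 u$.

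Next I would use the boundary conditions to solve for $\Gamma_0 u$. From $u\in\dom A_B$ we have $\Gamma_1 u = B\Gamma_0 u$, while by linearity and the Weyl function's defining property applied to $v$,
\begin{equation*}
\Gamma_1 u = \Gamma_1 u_\infty + \Gamma_1 v = \Gamma_1 u_\infty + M(z)\Gamma_0 v = \Gamma_1 u_\infty + M(z)\Gamma_0 u.
\end{equation*}
Equating the two expressions yields $(B-M(z))\Gamma_0 u = \Gamma_1 u_\infty = \Gamma_1(A_\infty-z)^{-1}f$. Since $z\in\rho(A_B)$, a standard argument (involving the extension theory relation between the spectrum of $A_B$ and non-invertibility of $B-M(z)$, stated in the excerpt) shows $(B-M(z))^{-1}$ exists as a bounded operator, so $\Gamma_0 u=(B-M(z))^{-1}\Gamma_1(A_\infty-z)^{-1}f$. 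Substituting back,
\begin{equation*}
(A_B-z)^{-1}f = u_\infty + \gamma(z)(B-M(z))^{-1}\Gamma_1(A_\infty-z)^{-1}f,
\end{equation*}
which is the second form of \eqref{eq:resolvent}.

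Finally I would verify $\Gamma_1(A_\infty-z)^{-1}=\gamma^*(\bar z)$ to obtain the first form. Given $f\in H$ and $Y_0\in\mathcal K$, set $w:=\gamma(\bar z)Y_0\in\ker(\mathcal A^*-\bar z)$, so that $\Gamma_0 w=Y_0$. Apply Green's formula to $u_\infty=(A_\infty-z)^{-1}f$ and $w$: the left-hand side equals $(f+zu_\infty,w)_H-(u_\infty,\bar z w)_H=(f,w)_H$, while the right-hand side reduces to $(\Gamma_1 u_\infty,\Gamma_0 w)_{\mathcal K}=(\Gamma_1 u_\infty,Y_0)_{\mathcal K}$ because $\Gamma_0 u_\infty=0$. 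Hence $(f,\gamma(\bar z)Y_0)_H=(\Gamma_1(A_\infty-z)^{-1}f,Y_0)_{\mathcal K}$ for all $f,Y_0$, establishing the claimed identity.

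The only potentially delicate point is the invertibility of $B-M(z)$ as a bounded operator: one has to know that $z\in\rho(A_B)$ implies not merely non-singularity but boundedness of $(B-M(z))^{-1}$ on $\mathcal K$. This is the substance of the extension-theoretic characterization of the spectrum of $A_B$ cited in the preceding paragraph and is the genuine input beyond algebraic manipulations; everything else is bookkeeping with the defining properties of $\Gamma_0,\Gamma_1,\gamma(z),M(z)$.
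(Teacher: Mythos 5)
Your argument is correct, and it is the standard boundary-triple derivation of the Kre\u\i n formula: decompose $u-u_\infty\in\ker(\mathcal A^*-z)$, use $\Gamma_0 u_\infty=0$ together with $\Gamma_1 u=B\Gamma_0 u$ and $\Gamma_1 v=M(z)\Gamma_0 v$ to solve for $\Gamma_0 u$, and identify $\gamma^*(\bar z)=\Gamma_1(A_\infty-z)^{-1}$ via Green's formula. Note, however, that the paper does not prove this proposition at all: it is quoted as a known result of Derkach--Malamud \cite{DM} (see also \cite{BHS2020}), so there is no in-paper argument to compare yours against; what you have written is essentially the proof found in those references. Two small points deserve attention if you want the argument fully self-contained. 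First, the statement allows any $z\in\rho(A_B)\cap\rho(A_\infty)$, including real points, whereas Definition \ref{def:WeylFunction} defines $M(z)$ only off the real axis; for such $z$ one should use the identity $M(z)=\Gamma_1\gamma(z)$ on $\rho(A_\infty)$ (equivalently, the analytic continuation of $M$), which is exactly what your step $\Gamma_1 v=M(z)\Gamma_0 v$ requires. Second, the invertibility of $B-M(z)$, which you correctly flag as the genuine input, can in fact be obtained directly from your own computation rather than from the spectral characterisation: injectivity holds because $(B-M(z))h=0$ forces $\gamma(z)h\in\ker(A_B-z)=\{0\}$, and surjectivity follows since your Green's-formula identity shows $\Gamma_1(A_\infty-z)^{-1}=\gamma(\bar z)^*$ maps onto $\mathcal K$ (as $\gamma(\bar z)$ is bounded below onto the closed subspace $\ker(\mathcal A^*-\bar z)$), after which boundedness of the inverse is the bounded-inverse theorem; this closes the one delicate point without appeal to the continuation statement.
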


Returning to the setup of Theorem \ref{th:direct}, we let $p$ denote the number of vertices of the fundamental graph  $\widetilde{\mathcal{G}}$. Labelling  these vertices by $\widetilde{v}_j$, $j=1,\ldots,p$,
we introduce $\widetilde{\Gamma}_0$ and $\widetilde{\Gamma}_1(t)$ to be linear operators from
\begin{equation}
\label{eq:ddomwtAmax}
\dom \wt{\Delta}_{\rm max}(A,t) = \{f\in W_2^2(\widetilde{\mathcal{G}}) \,\vert\, \, \text{ $f$ is continuous } \text{on the graph $\wtc{G}$ } \}
\end{equation}
to $\mathbb{C}^p$ defined as follows:
\begin{equation}
\label{eq:dGamma}
\widetilde{\Gamma}_0f=
\begin{pmatrix}
f(\widetilde{v}_1)\\
\ldots\\
f(\widetilde{v}_p)
\end{pmatrix};
\qquad
\widetilde{\Gamma}_1(t)f=
\begin{pmatrix}
\sum\limits_{\wt{e}\in\wtc{E}}\wt{\partial}^A_n f_{\wt{e}}(\wt{v}_1,t)
\\
\ldots
\\
\sum\limits_{\wt{e}\in\wtc{E}}\wt{\partial}^A_n f_{\wt{e}}(\wt{v}_p,t)
\end{pmatrix}.
\end{equation}

Then, performing integration by parts, we have for the maximal operator $\wt{\Delta}_{\rm max}(A,t)$, taking the role of $\mathcal{A}^*$ above and defined by the expression \eqref{eq:dwtA} on the domain \eqref{eq:ddomwtAmax},
\[
\langle \wt{\Delta}_{\rm max}(A,t) f ,g \rangle_{L_2(\widetilde{\mathcal{G}})}-\langle f ,\wt{\Delta}_{\rm max}(A,t) g \rangle_{L_2(\widetilde{\mathcal{G}})}
=
\langle \widetilde{\Gamma}_1(t)f, \widetilde{\Gamma}_0 g\rangle_{\mathbb{C}^p}-\langle \widetilde{\Gamma}_0 f,   \widetilde{\Gamma}_1(t)g\rangle_{\mathbb{C}^p},
\]
which implies

\begin{lemma}\label{lemma:triple}
The triple $\{\mathbb{C}^p, \widetilde{\Gamma}_0, \widetilde{\Gamma}_1(t)\}$ is a boundary triple in the sense of Definition  \ref{def:bonudaryTriple} for the operator $\wt{\Delta}_{\rm max}(A,t)$, defined by \eqref{eq:dwtA} on the domain \eqref{eq:ddomwtAmax}. The (minimal) symmetric operator $\mathcal A=\wt{\Delta}_{\rm min}(A,t)$ is defined by the same differential expression on the domain
\begin{equation}\label{eq:Amin}
\dom \wt{\Delta}_{\rm max}(A,t) \cap \ker \widetilde{\Gamma}_0 \cap \ker \widetilde{\Gamma}_1(t).
\end{equation}
\end{lemma}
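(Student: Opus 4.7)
The plan is to verify the two axioms of Definition \ref{def:bonudaryTriple} for the stated triple, and then deduce the description of the minimal symmetric operator from general theory.

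For the abstract Green's formula, the text just before the lemma already records the outcome, so the task is to turn this into a proof. I would expand $\langle \wt{\Delta}_{\rm max}(A,t)f, g\rangle_{L_2(\widetilde{\mathcal{G}})} - \langle f, \wt{\Delta}_{\rm max}(A,t)g\rangle_{L_2(\widetilde{\mathcal{G}})}$ as a sum over $\widetilde{e}\in\widetilde{\mathcal{E}}$ of integrals over $[0,l_e]$ and perform twofold integration by parts on each. The factorised expression $-(\partial_x + i\alpha_{\widetilde{e}})b_{\widetilde{e}}^2(\partial_x + i\alpha_{\widetilde{e}})$, with $\alpha_{\widetilde{e}}(x,t) := \langle g_e,t\rangle/l_e + A_{\widetilde{e}}(x)$, produces on each edge boundary terms of precisely the form $\pm\bigl[\overline{g_{\widetilde{e}}(v)}\cdot \wt{\partial}^A_n f_{\widetilde{e}}(v,t) - f_{\widetilde{e}}(v)\cdot \overline{\wt{\partial}^A_n g_{\widetilde{e}}(v,t)}\bigr]$ at each endpoint $v$, with signs dictated by the orientation convention of \eqref{eq:ddwtn}. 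Collecting these terms and using continuity of $f$ and $g$ at each vertex $\widetilde{v}_j$ to pull out the common endpoint values shared by all incident edges, I would regroup the sum over edges into a sum over vertices and recognise the result as $\langle \widetilde{\Gamma}_1(t)f,\widetilde{\Gamma}_0 g\rangle_{\mathbb{C}^p} - \langle \widetilde{\Gamma}_0 f,\widetilde{\Gamma}_1(t)g\rangle_{\mathbb{C}^p}$.

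For surjectivity of $\widetilde{\Gamma}_0\oplus\widetilde{\Gamma}_1(t)$, given arbitrary $Y_0,Y_1\in\mathbb{C}^p$, I would construct $f$ by a local construction on each edge. On each $\widetilde{e} = (\widetilde{v}_j,\widetilde{v}_k)$ pick a $C^\infty$ function on $[0,l_e]$, supported in small neighbourhoods of the endpoints, taking values $(Y_0)_j$ and $(Y_0)_k$ at $0$ and $l_e$ respectively; this already yields $\widetilde{\Gamma}_0 f = Y_0$ together with continuity on $\widetilde{\mathcal{G}}$. Then, at each vertex $\widetilde{v}_j$, distinguish one incident edge and add to $f$ on it a bump vanishing at both endpoints with a prescribed derivative at the end lying at $\widetilde{v}_j$; since this correction contributes to $\sum_{\widetilde{e}}\wt{\partial}^A_n f_{\widetilde{e}}(\widetilde{v}_k,t)$ only for $k=j$, the required value $(Y_1)_j$ can be matched independently at each vertex, so the system decouples.

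The identification of the minimal operator is then a standard consequence of the boundary triple axioms: for any $g\in\dom\wt{\Delta}_{\rm max}(A,t)$, the mapping $f\mapsto \langle \wt{\Delta}_{\rm max}(A,t)f,g\rangle - \langle f,\wt{\Delta}_{\rm max}(A,t)g\rangle$ rewrites via Green's formula as a linear functional of $(\widetilde{\Gamma}_0 f,\widetilde{\Gamma}_1(t)f)\in\mathbb{C}^p\oplus\mathbb{C}^p$, and by surjectivity its vanishing on $\dom\wt{\Delta}_{\rm max}(A,t)$ forces $\widetilde{\Gamma}_0 g=0$ and $\widetilde{\Gamma}_1(t)g=0$; this identifies $\dom\mathcal{A} = \dom\mathcal{A}^{**}$ with the intersection \eqref{eq:Amin}. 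The only genuine obstacle is bookkeeping: one must carefully track signs in the integration by parts and verify that the purely imaginary magnetic/quasimomentum contributions $i\alpha_{\widetilde{e}}$ in the boundary traces combine symmetrically between $f$ and $g$, so that the magnetic co-normal derivative $\wt{\partial}^A_n$ (rather than its unitary twists) appears cleanly on both sides of Green's identity. This relies decisively on the $(\partial_x + i\alpha)b^2(\partial_x + i\alpha)$ factorisation.
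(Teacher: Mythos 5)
Your proposal is correct and follows essentially the same route as the paper, which simply performs the edge-wise integration by parts to obtain the displayed Green's identity and then asserts the lemma; your write-up merely fills in the details the paper leaves implicit (the regrouping of boundary terms over vertices, the surjectivity construction using bump functions and the non-degeneracy $b_e\ge c_0>0$, and the standard identification of the minimal domain via surjectivity). No gaps of substance.
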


\section{The graph $\mathcal{G}_1$}\label{sec:Model_graph}

In the present section we introduce our model operator and carry out its spectral analysis, aided by the general approach outlined above.
Consider the graph $\mathcal{G}_1$ represented in Fig.  \ref{fi:g1}, left. In order to visually emphasise that the graph is viewed as embedded into $\mathbb{R}^3$, we have elected to use dotted lines for the edges which would appear ``invisible'' if the graph was entirely located on an infinite cylinder. We will henceforth assume that all its edges are of unit length and that all the functions  $b_e$ are equal to identity, i.e.,
\[
l_{e}=1, \qquad b_{e}(x)\equiv1,  \qquad e\in \mathcal{E}.
\]
The corresponding fundamental graph  $\wtc{G}$ is represented in Fig.  \ref{fi:g1}, right. We will assume that the magnetic potential on the horizontal edges of $\mathcal{G}_1$ has opposite signs with its absolute value equal to a constant $A$, and that it is equal to zero identically on the rest of the edges.

\begin{remark}
Assuming that the graph  $\mathcal{G}_1$ is embedded into $\mathbb{R}^3$, the magnetic potential introduced above can be attained by a  \textbf{constant magnetic field}. Indeed, if one has the horizontal edges of the graph of Fig.  \ref{fi:g1} positioned along the lines  $y=\pm 1/2,z=0$, the rest of the edges lying in the planes parallel to the plane  $y=0$, then the 3D magnetic potential $\mathbf{A}=(-2A y,0,0)$ generates the prescribed magnetic potentials on the graph edges. The corresponding magnetic field is therefore
\[
\mathbf{B}=\mathbf{curl\ A} =(0,0,2A).
\]
It needs to be stressed that here we rely upon the general convergence results for Laplacians on thin structures converging to metric graphs, see \cite{Postbook} for details; cf. an alternative approach to the same problem developed in \cite{antiPost}. Indeed, the mentioned results are necessary in order to attribute precise meaning to the model of a Laplacian on a metric graph embedded into $\mathbb{R}^d$, both with and without a magnetic field.
\end{remark}

\begin{figure}[ht]
\vspace{0mm}
\centering

\unitlength 1.0mm
\linethickness{0.3pt}

\begin{picture}(100,30)(0,0)

\multiput(0,5)(20,0){4}{\circle*{1}}
\multiput(0,25)(20,0){4}{\circle*{1}}
\put(-5,5){\line(1,0){70.00}}
\put(-5,25){\line(1,0){70.00}}

\qbezier[100](0,5)(5,15)(0,25)
\qbezier[30](0,5)(-5,15)(0,25)

\qbezier[100](20,5)(25,15)(20,25)
\qbezier[30](20,5)(15,15)(20,25)

\qbezier[100](40,5)(45,15)(40,25)
\qbezier[30](40,5)(35,15)(40,25)

\qbezier[100](60,5)(65,15)(60,25)
\qbezier[30](60,5)(55,15)(60,25)

\qbezier[8](-5,5)(-7.5,5)(-10,5)
\qbezier[8](-5,25)(-7.5,25)(-10,25)
\qbezier[8](65,5)(67.5,5)(70,5)
\qbezier[8](65,25)(67.5,25)(70,25)

\put(100,5){\circle*{1}}
\put(100,25){\circle*{1}}
\qbezier[100](100,5)(105,15)(100,25)
\qbezier[30](100,5)(95,15)(100,25)
\put(100,2){\circle{6}}
\put(100,28){\circle{6}}

\put(20,-1){\vector(1,0){20}}
\put(40,28){\vector(-1,0){20}}

\put(28,0){$A$}
\put(28,29){$A$}

\put(70,15){$\mathcal{G}_1$}
\put(105,15){$\wtc{G}_1$}

\end{picture}

\vspace{0mm}
\caption{The graph $\mathcal{G}_1$  (left) and its fundamental graph  $\wtc{G}_1$ (right). Both solid and dotted lines represent graph edges.}
\label{fi:g1}
\end{figure}
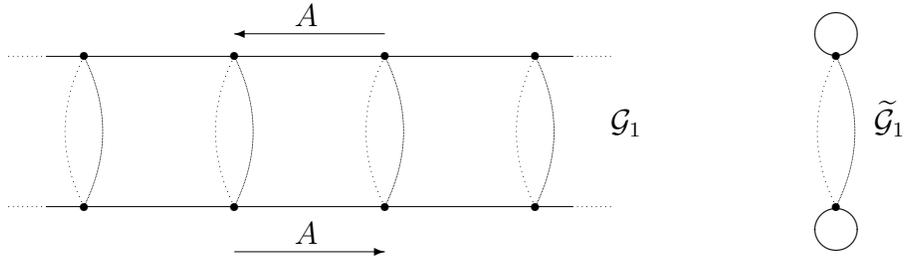

Denote by  $M(z,t,A)$ the $M$-matrix of the operator  $\widetilde{\Delta}(A,t)$ on the fundamental graph  $\wtc{G}_1$ for the magnetic potential $A$  introduced above and the quasimomentum  $t$, relative to the triple constructed in Lemma \ref{lemma:triple}. An explicit calculation in line with Definition \ref{def:WeylFunction}, \eqref{eq:dGamma} (see details in, e.g., \cite{Yorzh3}, cf. \cite{CEKN}) yields:
\begin{equation}
\label{eq:dwtM}
M(z,t,A)=\frac{2\sqrt{z}}{\sin \sqrt{z}}
\begin{pmatrix}
-2\cos \sqrt{z} +\cos(t+A) & 1 \\
1 & -2\cos \sqrt{z} +\cos(t-A)
\end{pmatrix}.
\end{equation}
\begin{lemma}
\label{le:specmin}
The non-residual spectrum of the minimal operator  $\wt{\Delta}_{\min}(A,t)$ on the graph $\wtc{G}_1$, defined by  \eqref{eq:dwtA} on the domain \eqref{eq:Amin}, consists of an infinite sequence of eigenvalues of multiplicity two,
\[
\sigma(\wt{\Delta}_{\min}(A,t))=\sigma_d(\wt{\Delta}_{\min}(A,t))=\{(\pi k)^2\}_{k=1}^{\infty},
\]
for all magnetic potentials $A$ of the form considered and all values of quasimomentum  $t\in[-\pi,\pi)$.
\end{lemma}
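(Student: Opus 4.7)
The plan is to reduce the eigenvalue problem for $\wt{\Delta}_{\min}(A,t)$ to Dirichlet problems on the individual edges of $\wtc G_1$, count parameters using the Kirchhoff-type condition $\wt\Gamma_1(t) f = 0$, and identify the remainder of the spectrum as residual via the boundary-triple dichotomy.

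First, suppose $f$ is an eigenfunction of $\wt{\Delta}_{\min}(A,t)$ at $\lambda$. Since $f \in \ker \wt\Gamma_0$ and $f$ is continuous on $\wtc G_1$, each edge restriction $f_{\wt e}$ must vanish at both endpoints of $\wt e$. The gauge transformation $g_{\wt e}(x) := \exp\bigl(\mathrm i \int_0^x (\langle g_e, t\rangle/l_e + A_{\wt e}(s))\, ds\bigr)\, f_{\wt e}(x)$ converts the magnetic ODE \eqref{eq:dwtA} into $-g_{\wt e}'' = \lambda g_{\wt e}$ on $[0,1]$ with $g_{\wt e}(0) = g_{\wt e}(1) = 0$; this Dirichlet problem admits non-trivial solutions precisely at $\lambda = (\pi k)^2$, $k \in \mathbb N$, all proportional to $\sin(\pi k x)$. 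Hence $\sigma_d(\wt{\Delta}_{\min}(A,t)) \subseteq \{(\pi k)^2\}_{k\geq 1}$.

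Second, I would carry out the multiplicity count at each $\lambda_k = (\pi k)^2$. Each edge restriction is determined up to a single complex scalar $c_{\wt e}$, so the candidate eigenspace is four-dimensional, parametrised by the edges of $\wtc G_1$ (two horizontal loops, two vertical edges). The condition $\wt\Gamma_1(t) f = 0$ supplies two scalar equations, one per vertex. Evaluating $f_{\wt e}'(0)$ and $f_{\wt e}'(l_e)$ using $\sin \pi k = 0$ and $\cos \pi k = (-1)^k$, one finds that each loop $\wt e_j$ contributes $c_{\wt e_j}\pi k \bigl(1 - (-1)^k\, e^{-\mathrm i(t\pm A)}\bigr)$ to the sum at its vertex, while each vertical edge contributes $c_{\wt e}\pi k$ at its tail and $-(-1)^k c_{\wt e}\pi k$ at its head. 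The resulting $2 \times 4$ system has rank two, leaving a two-dimensional eigenspace.

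Third, to exclude non-residual spectrum outside $\{(\pi k)^2\}_{k\geq 1}$, I would appeal to the standard boundary-triple dichotomy. For $\lambda$ not in this sequence, Step 1 gives $\ker(\wt{\Delta}_{\min}(A,t) - \lambda) = 0$, ruling out eigenvalues. At the same time, the solution operator $\gamma(\lambda)$ of Proposition \ref{prop:Krein} applied to any non-zero vector in $\mathbb C^2$ produces a non-trivial element of $\ker(\wt{\Delta}_{\max}(A,t) - \lambda)$. Since $\mathop{\mathrm{Range}}(\wt{\Delta}_{\min}(A,t) - \lambda)^\perp = \ker(\wt{\Delta}_{\max}(A,t) - \bar\lambda)$, this range is not dense and $\lambda$ lies in the residual spectrum --- as does every $\lambda \in \mathbb C \setminus \mathbb R$ by positivity of the deficiency indices. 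The non-residual spectrum therefore reduces to $\{(\pi k)^2\}_{k\geq 1}$.

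The main obstacle is controlling the rank of the $2 \times 4$ system in Step 2 uniformly in $(A,t)$: its entries feature the resonance factors $1 - (-1)^k e^{-\mathrm i(t \pm A)}$, which vanish for certain alignments of $t \pm A$ with $\pi \mathbb Z$, and a careful separate treatment of these configurations is required to secure the stated constant multiplicity.
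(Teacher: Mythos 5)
Your argument follows the paper's proof essentially verbatim: edge-wise reduction to the Dirichlet problem (the paper writes the eigenfunctions $a_k e^{-iB_e x}\sin(\pi k x)$ directly instead of gauging), followed by the count of four edge constants against the two Kirchhoff equations at the vertices — the paper simply asserts the linear independence of those two equations — while your extra step classifying the remaining spectrum as residual via deficiency elements is left implicit in the paper. The obstacle you flag at the end dissolves quickly: the two rows of your $2\times 4$ system can only become proportional if both resonance factors $1-(-1)^k e^{-i(t+A)}$ and $1-(-1)^k e^{-i(t-A)}$ vanish simultaneously, which forces $2A\in 2\pi\mathbb{Z}$, so for any genuinely magnetic potential of the form considered the rank is two uniformly in $t$ and $k$.
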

\begin{proof}

An eigenfunction  $u$ of the operator $\wt{\Delta}_{\min}(A,t)$, restricted to the edge $e$, solves the following spectral problem on $e$: \[
\begin{cases}
-\Big(\frac{d}{dx}+iB_e\Big)^2 u_e = \lambda u_e, \\
u(0)=u(1)=0,
\end{cases}
\]
where the quantity $B_e$ is determined by the magnetic potential and quasimomentum. The eigenfunctions and eigenvalues of this problem are
$
f_k=a_k e^{-iB_ex}\sin kx$ and $\lambda_k=(\pi k)^2, \ k\in\mathbb{N},
$
respectively,
where $a_k$ is an arbitrary constant. For the value $\lambda_k=(\pi k)^2$ to be an eigenvalue of the operator $\wt{\Delta}_{\min}(A,t)$ it suffices to pick constants  $a_k$ on different edges such that the Kirchhoff matching conditions at the graph vertices are satisfied, i.e., such that the eigenfunction corresponding to the value $\lambda_k=(\pi k)^2$ belongs to the kernel of the operator $\wt{\Gamma}_1$. In the case of the graph $\wtc{G}_1$ these conditions are represented as two linear equations to determine four independent variables. The linear independence of the named linear equations implies that the system is guaranteed to admit a solution; the dimension of the space of solutions is equal to two. Thus, $\lambda_k=(\pi k)^2$ is an eigenvalue of the operator  $\wt{\Delta}_{\min}(A,t)$ of multiplicity two, as claimed.
\end{proof}
\begin{remark}
Since the above lemma holds for all values of quasimomentum, it follows that the spectrum of the operator $\Delta_{\min}(A)$ on the graph $\mathcal{G}_1$, which is related to $\widetilde{\Delta}_{\min}(A,t)$ via \eqref{eq:fiber},  for any constant $A$ contains a discrete set of eigenvalues of infinite multiplicity $\lambda_k=(\pi k)^2$, $k\in\mathbb{N}$.
\end{remark}

\begin{theorem}
\label{th:spec}
The discrete spectrum of the operator $\Delta(A)$ on the graph $\mathcal{G}_1$ coincides with the spectrum of the minimal operator $\Delta_{\min}(A)$,
\[
\sigma_d(\Delta(A))=\{(\pi k)^2\}_{k=1}^{\infty}.
\]
The (absolutely) continuous spectrum of the operator $\Delta(A)$ on the graph $\mathcal{G}_1$ consists of an infinite set of spectral bands,  \[
\sigma_{ac}(\Delta(A))=\bigcup\limits_{k=0}^{\infty}\Big(\big[l^2_{k+},r^2_{k+}\big]\cup\big[l^2_{k-}  ,r^2_{k-} \big]\cup \big[(2\pi - r_{k-})^2,(2\pi - l_{k-})^2\big]\cup\big[(2\pi - r_{k+})^2 , (2\pi - l_{k+})^2\big]\Big),
\]
where the quantities  $l_{k\pm}$, $r_{k\pm}$ are determined as functions of the magnetic potential $A$  as follows:
\begin{align*}
l_{k+}&=
\begin{cases}
\arccos \frac{\sqrt{1+\sin^{-2} A}}{2}+2\pi k , \quad& \sin^2 A\geq |\cos A|,\\
\arccos \frac{1+|\cos A|}{2}+2\pi k, \quad& \text{otherwise},
\end{cases}
\\
r_{k+}&=\arccos \tfrac{1-|\cos A|}{2}+2\pi k,
\\
l_{k-}&=\arccos \tfrac{-1+|\cos A|}{2}+2\pi k,
\\
r_{k-}&=
\begin{cases}
\arccos\left( -\frac{\sqrt{1+\sin^{-2} A}}{2}\right) +2\pi k, \quad& \sin^2 A \geq |\cos A|,\\
\arccos\left( -\frac{1+|\cos A|}{2}\right) +2\pi k, \quad& \text{otherwise}.
\end{cases}
\end{align*}
The following subset of the spectrum of the operator $\Delta(A)$ is of multiplicity four:
\begin{equation}
\label{eq:specfour}
\bigcup\limits_{k=0}^{\infty}\Big(\big[l^2_{k+},l'^2_{k+}\big]\cup\big[r'^2_{k-}  ,r^2_{k-} \big]\cup \big[(2\pi - r_{k-})^2,(2\pi - r'_{k-})^2\big]\cup\big[(2\pi - l'_{k+})^2 , (2\pi - l_{k+})^2\big]\Big).
\end{equation}
Here
\begin{align*}
l'_{k+}&=
\arccos\left( \frac{1+|\cos A|}{2}\right)+2\pi k,
\\
r'_{k-}&=
\arccos\left( -\frac{1+|\cos A|}{2}\right) +2\pi k.
\end{align*}
The remaining part of the continuous spectrum has multiplicity equal to two.
\end{theorem}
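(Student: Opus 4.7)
\medskip

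\noindent\textbf{Proof proposal.} The plan is to reduce to the fibre problem via Theorem \ref{th:direct}, then convert the computation of $\sigma(\widetilde\Delta(A,t))$ to a spectral problem for the $2\times 2$ matrix $M(z,t,A)$ of \eqref{eq:dwtM}, and finally analyse the resulting dispersion relation as a function of $t\in[-\pi,\pi)$.

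First, by Theorem \ref{th:direct} one has $\sigma(\Delta(A))=\overline{\bigcup_{t\in\widetilde{\mathcal C}}\sigma(\widetilde\Delta(A,t))}$ and the standard direct-integral calculus provides the ac/pp dichotomy as well as the multiplicity of each spectral point, counted as the $t$-multiplicity of its appearance in the fibres. Since $\widetilde{\mathcal{G}}_1$ is compact, each fibre operator $\widetilde\Delta(A,t)$ has purely discrete spectrum; using the boundary triple of Lemma \ref{lemma:triple} (so that the Kirchhoff operator is $A_B$ with $B=0$), Proposition \ref{prop:Krein} tells me that the eigenvalues of $\widetilde\Delta(A,t)$ are precisely the Dirichlet set $\{(\pi k)^2\}_{k\ge 1}$ (the poles of $M$, which are $t$-independent and hence furnish the flat part $\sigma_d(\Delta(A))$ of infinite multiplicity) together with the zeros of $\det M(z,t,A)$.

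Next, I would compute the dispersion relation $\det M(z,t,A)=0$. Dropping the non-vanishing scalar factor and writing $c=\cos\sqrt z$, the determinantal condition reduces, via $\cos(t\pm A)=\cos t\cos A\mp\sin t\sin A$ and the identity $\cos(t+A)\cos(t-A)=\cos^2 t-\sin^2 A$, to the quadratic
\[
4c^{2}-4c\cos t\cos A-(\sin^{2}t+\sin^{2}A)=0,
\]
with roots
\[
c_{\pm}(t,A)=\tfrac{1}{2}\bigl(\cos t\cos A\pm\sqrt{1+\sin^{2}A\sin^{2}t}\bigr).
\]
Thus the ac part of $\sigma(\Delta(A))$ coincides with $\{z\ge 0:\cos\sqrt z\in\mathcal R_{+}\cup\mathcal R_{-}\}$, where $\mathcal R_{\pm}$ is the range of $t\mapsto c_{\pm}(t,A)$ on $[-\pi,\pi)$; the symmetry $c_{-}(t,A)=-c_{+}(\pi-t,A)$ lets me concentrate on $c_{+}$.

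The core computation is then to identify $\mathcal R_{+}$ and, crucially, the fibres of $c_{+}$. Differentiating gives $\partial_t c_{+}=0$ at $\sin t=0$ and at $\cos A\sqrt{1+\sin^{2}A\sin^{2}t}=\sin^{2}A\cos t$; squaring and simplifying this second equation yields $\sin^{2}t=\sin^{2}A-\cot^{2}A$, which admits a solution in $(0,\pi)$ precisely when $\sin^{2}A\ge|\cos A|$ — this is the dichotomy appearing in the statement. In the regime $\sin^{2}A<|\cos A|$, $c_{+}$ is monotone between the endpoints $(1\pm|\cos A|)/2$, so $\mathcal R_{+}$ is that interval and each value is attained twice in $[-\pi,\pi)$ (multiplicity two). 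In the regime $\sin^{2}A\ge|\cos A|$, substituting the critical $\sin^{2}t$ back into $c_{+}$ yields the maximum value $\frac{\sqrt{1+\sin^{-2}A}}{2}$; then $c_{+}$ increases from $(1+|\cos A|)/2$ at $t=0$ to this maximum, and decreases back to $(1-|\cos A|)/2$ at $t=\pi$. Applying $\arccos$ (monotone decreasing) to these ranges and translating by $2\pi k$ produces precisely the formulas for $l_{k+},l'_{k+},r_{k+}$; the analogous computation for $c_{-}$ (or the above-mentioned symmetry) yields $l_{k-},r'_{k-},r_{k-}$. The two groups of bands $[(2\pi-r_{k\pm})^2,(2\pi-l_{k\pm})^2]$ arise from the second preimage branch $\sqrt z=2\pi(k+1)-\arccos(c)$ of $\cos\sqrt z=c$. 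Finally, the multiplicity count: on the ``monotone'' sub-interval of $\mathcal R_{+}$, namely $c\in[(1-|\cos A|)/2,(1+|\cos A|)/2]$, there are two preimages under $c_{+}$ in $[-\pi,\pi)$; on the ``doubly covered'' interval $c\in[(1+|\cos A|)/2,\sqrt{1+\sin^{-2}A}/2]$ (present only in the second regime) there are four, and these correspond — after applying $\arccos$ — exactly to the multiplicity-four subset \eqref{eq:specfour}; the symmetric analysis for $c_{-}$ produces the remaining two of the four intervals in \eqref{eq:specfour}.

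The routine parts are the determinant computation and the standard direct-integral multiplicity bookkeeping. The main obstacle I foresee is the careful monotonicity/critical-point analysis of $c_{+}(t,A)$ (and its reflection $c_{-}$): one has to verify that the candidate critical point from the squaring step is a true critical point (not an extraneous root), check the sign of $\partial_t c_{+}$ on each subinterval to conclude that it is indeed a maximum, and confirm that the interior extremum $\sqrt{1+\sin^{-2}A}/2$ does exceed the endpoint value $(1+|\cos A|)/2$ — all of which pin down the exact structure of $\mathcal R_{+}$ and, together with the $t\mapsto -t$ symmetry of the fibres, yield the multiplicity assertion.
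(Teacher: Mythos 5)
Your proposal is correct and follows essentially the same route as the paper: fibre decomposition, reduction of each fibre's spectrum to the poles of $M$ (handled via the minimal/Dirichlet operator, cf.\ Lemma \ref{le:specmin}) together with the zeros of $\det M(z,t,A)$, and then a range/critical-point analysis of $c_\pm(t,A)=\tfrac12\bigl(\cos t\cos A\pm\sqrt{1+\sin^2A\sin^2t}\bigr)$ with the same dichotomy $\sin^2A\gtrless|\cos A|$ and the same symmetry reducing $c_-$ to $c_+$. Your quadratic in $\cos\sqrt z$ is just the determinant form of the paper's eigenvalue computation, so the two arguments coincide in substance.
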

\begin{remark}
Under the condition $|\cos A|>\sin^2 A$ the set \eqref{eq:specfour} is shown to be empty and therefore all the continuous spectrum of the operator is of multiplicity two. If on the other hand $\cos A =0$, then the set \eqref{eq:specfour} coincides with the continuous spectrum and the multiplicity of the continuous spectrum is equal to four everywhere.
\end{remark}

\begin{proof}
By \cite[Th. \RomanNumeralCaps{13}.85]{RS78}, instead of the spectrum of the operator $\Delta(A)$ one considers the spectra of the operator family  $\widetilde{\Delta}(A,t)$ for all $t\in[-\pi,\pi)$. Namely, $z\in\sigma(\Delta(A))$ iff one has
\begin{equation*}
\label{eq:rs}
\mu\{t\in[-\pi,\pi): (z-\delta,z+\delta)\cap\sigma(\Delta(A,t))\}\not=0,
\end{equation*}
where $\mu$ is the Lebesgue measure.
For the spectral analysis of the operator  $\widetilde{\Delta}(A,t)$ we consider its  $M$-matrix \eqref{eq:dwtM}.
Clearly, by \cite{BHS2020, Smudgen2012} the spectrum of $\widetilde{\Delta}(A,t)$ for a.a. $t$ can be obtained as the set
\[
\sigma(\wt{\Delta}_{\min}(A,t))\cup\{z\in\mathbb{R}:\, \det M(z,t,A)=0\},
\]
where the minimal operator is defined by \eqref{eq:dwtA}, \eqref{eq:Amin}, and its discrete spectrum is described in Lemma \ref{le:specmin}.
The eigenvalues of the matrix $M(z,t,A)$ are equal to
\[
\lambda_{\pm}(z,t,A)=\frac{2\sqrt{z}}{\sin \sqrt{z}}\Big(-2\cos \sqrt{z} + \cos A \cos t \pm \sqrt{1+\sin^2 A \sin^2 t}\Big).
\]
Consider the points $z$ such that  $\lambda_\pm(z,t,A)$ vanishes at some value of  $t\in[-\pi,\pi)$. We have
\begin{equation}
\label{eq:cos}
2\cos \sqrt{z} = \cos A \cos t \pm \sqrt{1+\sin^2 A \sin^2 t}.
\end{equation}
We remark that  $z$ is a continuous function of $t$ and each value of $z$ can only be attained a finite number of times. 

In order to complete the proof, we therefore only need to find out the ranges of the right hand sides in \eqref{eq:cos} considered as functions of $t$ for both signs in $\pm$ and all magnetic potentials $A$. Henceforth we abbreviate the mentioned right hand sides of \eqref{eq:cos} as $f_{\pm}(t,A)$.

Consider $f_+(t,A)$ first.
\\
1) The case of ${\sin^2 A>\cos A\geq0}$.
\\
In this case,  $f_{+}(t,A)$ has $4$ extrema on the set $[-\pi,\pi)$, attained at either
\[
\sin t=0 \quad\text{or}\quad \begin{cases}
\sin^2 t =\frac{\sin^4 A-\cos^2 A}{\sin^2 A}, \\ \cos t\geq 0.\end{cases}
\]
Two of these are global maxima with the function at them being equal to $\sqrt{1+\sin^{-2}A}$. The value of the function at the global minimum is equal to $1-\cos A$, whereas at the local minimum it admits the value $1+\cos A$.
Fig. \ref{fi:ranf} (a) represents the graph of $f_+$ for the magnetic potential $A=\frac{4\pi}{9}$.
\\
2) The case of ${\sin^2 A\leq\cos A}$.
\\
In this case, $f_+(t)$ has only two special points, namely  $t=0$ and $t=-\pi$.
The minimum and maximum are  $1-\cos A$ and $1+\cos A$, respectively.
Fig. \ref{fi:ranf} (b) is the graph of the function  $f_+$ for the magnetic potential $A=\frac{\pi}{9}$.
\begin{figure}[ht!]
\centering
\def\svgwidth{.9\textwidth}
\begingroup%
  \makeatletter%
  \providecommand\color[2][]{%
    \errmessage{(Inkscape) Color is used for the text in Inkscape, but the package 'color.sty' is not loaded}%
    \renewcommand\color[2][]{}%
  }%
  \providecommand\transparent[1]{%
    \errmessage{(Inkscape) Transparency is used (non-zero) for the text in Inkscape, but the package 'transparent.sty' is not loaded}%
    \renewcommand\transparent[1]{}%
  }%
  \providecommand\rotatebox[2]{#2}%
  \newcommand*\fsize{\dimexpr\f@size pt\relax}%
  \newcommand*\lineheight[1]{\fontsize{\fsize}{#1\fsize}\selectfont}%
  \ifx\svgwidth\undefined%
    \setlength{\unitlength}{465.67867878bp}%
    \ifx\svgscale\undefined%
      \relax%
    \else%
      \setlength{\unitlength}{\unitlength * \real{\svgscale}}%
    \fi%
  \else%
    \setlength{\unitlength}{\svgwidth}%
  \fi%
  \global\let\svgwidth\undefined%
  \global\let\svgscale\undefined%
  \makeatother%
  \begin{picture}(1,0.29183077)%
    \lineheight{1}%
    \setlength\tabcolsep{0pt}%
    \put(-0.00221451,0.24770262){\makebox(0,0)[lt]{\lineheight{1.25}\smash{\begin{tabular}[t]{l}(a)\end{tabular}}}}%
    \put(0.55825154,0.24770262){\makebox(0,0)[lt]{\lineheight{1.25}\smash{\begin{tabular}[t]{l}(b)\end{tabular}}}}%
    \put(0,0){\includegraphics[width=\unitlength,page=1]{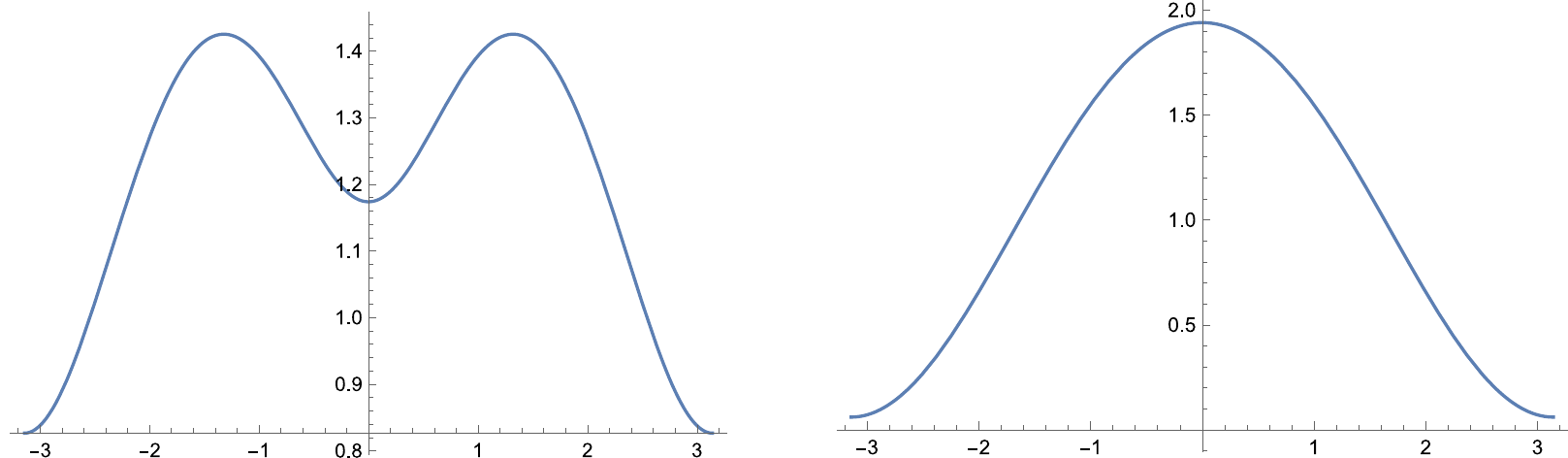}}%
  \end{picture}%
\endgroup%

\caption{The graph of the function  $f_+(t)$ for (a) $A=\frac{4\pi}{9}$; (b) $A=\frac{\pi}{9}$.}
\label{fi:ranf}
\end{figure}
\\
3) Other values of $A$.
\\
We have considered all $A\in[-\pi/2,\pi/2)$; moreover, one has
\[
M(z,t,A)=M(z,t\pm\pi,A-\pi).
\]
Hence for an arbitrary $A\in[\pi/2,3\pi/2)$ the set of ``zeros'' of the $M$-matrix coincides with the same for $A-\pi$, with the only difference that they are attained at the values  of quasimomentum, differing by  $\pi$.

\medskip

In order to consider the function $f_{-}(t)$ suffices to observe that
\[
f_-(t)=-f_+(t\pm\pi),
\]
from where the information on the range of $f_-$ is derived.
\end{proof}

\section{Rescaling}\label{sec:Rescaling}
In this section, we momentarily divert our attention from the example graph $\mathcal{G}_1$ to the general case. In the following sections, we will use the unitary transform defined below to perform the spectral analysis of the rescaled version of the operator of Section \ref{sec:Model_graph}.

Denote by $\mathcal{G}^{\varepsilon}=(\mathcal{V}^{\varepsilon},\mathcal{E}^{\varepsilon})$ the graph constructed based on  $\mathcal{G}=(\mathcal{V},\mathcal{E})$ so that
\begin{align*}
\mathcal{V}^{\varepsilon}&=\{\varepsilon v\,\vert\,v\in\mathcal{V}\},
\\
\mathcal{E}^{\varepsilon}&=\{(\varepsilon v,\varepsilon u)\,\vert\, (v,u)\in\mathcal{E}\},
\end{align*}
and the length of the segment corresponding to the edge  $(\varepsilon v, \varepsilon u)$ is equal to
\[
l_{(\varepsilon v, \varepsilon u)}=\varepsilon l_{(v,u)}.
\]
Note that if one treats the graph as a collection of points in $\mathbb{R}^{d_0}$, connected by smooth curves, the operation introduced above is nothing but rescaling of the graph by a factor of $\varepsilon$.

Let $\Delta_{\varepsilon}(A)$ be the magnetic graph Laplacian  \eqref{eq:dDelta}, defined on the graph  $\mathcal{G}^{\varepsilon}$.  The operator obtained by passing to its fiber decomposition will be denoted by $\widetilde{\Delta}_{\varepsilon}(A,t)$. Henceforth, it will be convenient to us to further unitary rescale this latter operator to the cell $\widetilde{\mathcal G}$. To this end, for $e$ being an edge of the graph $\widetilde{\mathcal{G}}$ and $\varepsilon e$ being the corresponding edge of the graph $\widetilde{\mathcal{G}}^\varepsilon$,
let $f\in L^2(\widetilde{\mathcal{G}})$. Denote by  $f_\varepsilon$ the following function on the graph  $\widetilde{\mathcal{G}}^\varepsilon$:
\[
(f_\varepsilon)_{\varepsilon e}(x) = f_e\Big(\frac{x}{\varepsilon}\Big).
\]
Introduce the operator $T_\varepsilon:L_2(\widetilde{\mathcal{G}}) \mapsto L_2(\widetilde{\mathcal{G}}^\varepsilon)$ by
\[
T_\varepsilon f = \frac{1}{\sqrt{\varepsilon}} f_\varepsilon.
\]
Since
\[
\Vert f_\varepsilon \Vert_{L_2(\widetilde{\mathcal{G}}^\varepsilon)}=\sqrt{\varepsilon}\Vert f \Vert_{L_2(\widetilde{\mathcal{G}})},
\]
the operator $T_\varepsilon$ is an isometry. Since it is also clearly surjective, $T_\varepsilon$ is unitary.

Slightly abusing the notation, in what follows we will keep using the same symbol $\widetilde{\Delta}_{\varepsilon}(A,t)$ for the operator $T_\e^* \widetilde{\Delta}_{\varepsilon}(A,t)  T_\e$.

Finally, the $M-$matrix of the latter operator at the value of the spectral parameter $z$ is $M_\varepsilon(z,A,t)$, in line with the notation of Section \ref{sec:Model_graph}.
\begin{lemma}\label{lemma:rescale}
The following relations hold.
\begin{align}
\label{eq:epsilon1}
\sigma(\Delta_\varepsilon (A/\varepsilon))&=\frac{1}{\varepsilon^2}\sigma(\Delta(A)),
\\
\label{eq:epsilon2}
\sigma(\widetilde{\Delta}_\varepsilon(A/\varepsilon,t/\varepsilon))&=\frac{1}{\varepsilon^2}\sigma\big(\widetilde{\Delta}(A,t)\big),
\\
\label{eq:epsilon3}
M_\varepsilon (z/\varepsilon^2,A/\varepsilon,t/\varepsilon)&=\frac{1}{\varepsilon^2}M(z,A,t).
\end{align}
\end{lemma}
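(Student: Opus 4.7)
All three assertions flow from a single change-of-variable computation based on the unitary $T_\varepsilon$ (used also, by a slight abuse of notation, for its natural analog $L_2(\mathcal G)\to L_2(\mathcal G^\varepsilon)$ defined by the same formula). Writing $g=T_\varepsilon f$ and $y=x/\varepsilon$, I would first record the chain-rule identity
\begin{equation*}
\Bigl(\frac{d}{dx}+\frac{iA_e(y)}{\varepsilon}+\frac{i\langle g_e^\varepsilon,t/\varepsilon\rangle}{l_e^\varepsilon}\Bigr)g(x)\,=\,\varepsilon^{-1}T_\varepsilon\Bigl[\Bigl(\frac{d}{dy}+iA_e(y)+\frac{i\langle g_e,t\rangle}{l_e}\Bigr)f\Bigr](x),
\end{equation*}
whose key ingredient is the pair of rescalings $g_e^\varepsilon=\varepsilon g_e$, $l_e^\varepsilon=\varepsilon l_e$, which together make the quasimomentum contribution pointwise invariant under the passage from $\widetilde{\mathcal G}$ to $\widetilde{\mathcal G}^\varepsilon$. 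Noting also the trivial intertwining $b_\varepsilon^2\,T_\varepsilon=T_\varepsilon\,b^2$, a second application of the same identity yields
\begin{equation*}
\widetilde\Delta_\varepsilon(A/\varepsilon,t/\varepsilon)\,T_\varepsilon=\varepsilon^{-2}\,T_\varepsilon\,\widetilde\Delta(A,t),
\end{equation*}
together with the analogous identity on $\mathcal G$ (with the quasimomentum term suppressed). A short check confirms that both the continuity condition and the Kirchhoff sum at the vertices are preserved by $T_\varepsilon$: continuity since $T_\varepsilon$ is pointwise composition with a scaling, and the Kirchhoff sum since it is homogeneous so that the common scalar $\varepsilon^{-3/2}$ appearing in front of each conormal derivative drops out of the vanishing condition. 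These two unitary equivalences immediately yield the spectral identities \eqref{eq:epsilon1} and \eqref{eq:epsilon2} via the spectral mapping theorem for unitary equivalence.

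For the $M$-matrix identity \eqref{eq:epsilon3} I would reuse the same computation to track the behaviour of the boundary maps of Lemma \ref{lemma:triple} under $T_\varepsilon$. Direct evaluation at the fundamental vertices shows that $\widetilde\Gamma_0$ picks up a factor $\varepsilon^{-1/2}$ when pulled through $T_\varepsilon$, whereas $\widetilde\Gamma_1(t/\varepsilon)$ picks up an extra $\varepsilon^{-1}$ from the first-order magnetic derivative computed above. For any $f_z\in\ker(\widetilde\Delta_{\max}(A,t)-z)$, the previous paragraph gives $T_\varepsilon f_z\in\ker(\widetilde\Delta_{\max,\varepsilon}(A/\varepsilon,t/\varepsilon)-z/\varepsilon^2)$; writing the defining identity of Definition \ref{def:WeylFunction} for $M_\varepsilon(z/\varepsilon^2,A/\varepsilon,t/\varepsilon)$ applied to $T_\varepsilon f_z$ and substituting the scaling rules for the boundary maps yields the identity \eqref{eq:epsilon3} on the image of $\widetilde\Gamma_0$, which is then promoted to the whole of $\mathbb C^p$ by surjectivity of $\widetilde\Gamma_0$ (property (2) of Definition \ref{def:bonudaryTriple}).

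The proof is essentially routine bookkeeping and I do not anticipate any substantive obstacle. The one step requiring a measure of care is the compatibility of the three competing scalings: the factor $\varepsilon$ in $g_e^\varepsilon$ compensates the factor $\varepsilon$ in $l_e^\varepsilon$ so that the quasimomentum term is invariant in the pointwise sense, and the scaling $A\mapsto A/\varepsilon$ is then exactly the one for which the two remaining contributions to the first-order magnetic derivative on $\mathcal G^\varepsilon$ pick up the same uniform prefactor $\varepsilon^{-1}$, so that squaring produces the $\varepsilon^{-2}$ appearing on the right-hand sides of \eqref{eq:epsilon1}--\eqref{eq:epsilon3}.
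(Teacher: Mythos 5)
Your intertwining computation $\widetilde\Delta_\varepsilon(A/\varepsilon,t/\varepsilon)\,T_\varepsilon=\varepsilon^{-2}\,T_\varepsilon\,\widetilde\Delta(A,t)$, together with the check that $T_\varepsilon$ preserves continuity and the Kirchhoff condition, is correct and does establish \eqref{eq:epsilon1} and \eqref{eq:epsilon2}; this is more explicit than the paper, which simply calls these two relations trivial and refers to the literature for \eqref{eq:epsilon3}.

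The gap is in the $M$-matrix step, and it is quantitative. Your own scaling rules read: vertex values scale as $\varepsilon^{-1/2}$ and conormal sums as $\varepsilon^{-3/2}$, i.e. if $\Gamma_0^\varepsilon,\Gamma_1^\varepsilon(t/\varepsilon)$ denote the maps of Lemma \ref{lemma:triple} written on $\widetilde{\mathcal G}^\varepsilon$, then $\Gamma_0^\varepsilon T_\varepsilon f=\varepsilon^{-1/2}\widetilde\Gamma_0 f$ and $\Gamma_1^\varepsilon(t/\varepsilon)T_\varepsilon f=\varepsilon^{-3/2}\widetilde\Gamma_1(t)f$. Substituting these into the defining identity of Definition \ref{def:WeylFunction} applied to $T_\varepsilon f_z$ gives $M_\varepsilon(z/\varepsilon^2,A/\varepsilon,t/\varepsilon)\,\varepsilon^{-1/2}\widetilde\Gamma_0 f_z=\varepsilon^{-3/2}\widetilde\Gamma_1(t)f_z$, that is $M_\varepsilon(z/\varepsilon^2,A/\varepsilon,t/\varepsilon)=\varepsilon^{-1}M(z,A,t)$ --- one power of $\varepsilon$ short of \eqref{eq:epsilon3}. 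The discrepancy is not a slip you can absorb: the scaling in \eqref{eq:epsilon3} is not independent of the choice of boundary triple, and the pair $(\varepsilon^{-1/2}\widetilde\Gamma_0,\varepsilon^{-3/2}\widetilde\Gamma_1(t))$ your bookkeeping produces, while a perfectly legitimate triple for the pulled-back maximal operator $\varepsilon^{-2}\widetilde\Delta_{\rm max}(A,t)$, is not the one the paper works with. The paper keeps $\widetilde\Gamma_0$ as the unweighted vector of vertex values on the unit cell and takes for the second map the conormal map generated by the pulled-back differential expression, i.e. $\varepsilon^{-2}\widetilde\Gamma_1(t)$ (this choice is forced by the abstract Green identity for $\varepsilon^{-2}\widetilde\Delta_{\rm max}(A,t)$ once $\Gamma_0$ is left unweighted); with that triple the Weyl identity does give the factor $\varepsilon^{-2}$, and only that normalisation reproduces the prefactor $2\sqrt z/(\varepsilon\sin\varepsilon\sqrt z)$ used at the start of the proof of Theorem \ref{th:degeneration} and the expansion $\lambda_+=2z+\cdots$ on which Sections \ref{sec:Perturbation} and \ref{sec:Homogenisation} rely. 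So, as written, your argument proves a version of \eqref{eq:epsilon3} with the wrong power of $\varepsilon$; to close the gap you must first fix the boundary triple (and hence the solution operator $\gamma$) with respect to which $M_\varepsilon$ is defined, rather than transporting the triple of Lemma \ref{lemma:triple} through $T_\varepsilon$ and treating the resulting maps as if they were that triple.
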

\begin{proof}
The first two relations are trivial. As for the equality \eqref{eq:epsilon3}, it can be obtained along similar lines to the corresponding calculation in \cite{CEKN}, see also \cite{CEK} for a general PDE case.
\end{proof}

The result just proved now allows us to consider $\e$-periodic tubular structures as required, and in particular, their limiting behaviour as $\e\to 0$. In so doing, by a slight abuse of notation we shall keep referring to $A$ as the value of the magnetic potential, although in reality the magnetic potential is henceforth being scaled by $\e^{-1}$ in line with the statement of Lemma~\ref{lemma:rescale}.

\section{Degeneracy of band edges}\label{sec:Degeneracy}

Here we return to the model operator of Section \ref{sec:Model_graph}. We will now show that under a specific choice of the magnetic potential the band function of the operator  $\Delta_\varepsilon(A)$ (and in particular, in the limit of $\varepsilon\to0$, which will be used in Section \ref{sec:Homogenisation} below) admits an asymptotic behaviour as $(t-t_0)^4$ in a vicinity of the (lower) edge of its spectrum or indeed in a vicinity of an edge of any of its spectral bands. Here $t_0$ denotes the value of the quasimomentum corresponding to the spectral parameter at the edge of a spectral band. We will henceforth refer to this behaviour as \emph{the degeneracy of an edge of a spectral band}. Generically of course for second-order ordinary differential operators the asymptotics of the form $(t-t_0)^2$ takes place, hence the choice of terminology.

Denote by $A_\pm$ magnetic potentials such that the following equalities hold:
\begin{equation}\label{eq:degeneracy_condition}
\cos A_\pm = \pm \sin^2 A_\pm.
\end{equation}
We will henceforth refer to values of the magnetic potential $A$ satisfying this equation as \emph{critical} in view of the following theorem.
\begin{theorem}
\label{th:degeneration}
The lower edge of the spectrum of the operator $\Delta_\varepsilon(A)$ is degenerate iff $A=A_{\pm}$. Under this condition, the left edges of all odd and the right edges of all even bands are degenerate as well.
\end{theorem}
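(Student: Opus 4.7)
The plan is to leverage the explicit spectral description of Theorem \ref{th:spec}. By Lemma \ref{lemma:rescale}, $\sigma(\Delta_\varepsilon(A/\varepsilon))=\varepsilon^{-2}\sigma(\Delta(A))$ and the fibre band functions scale correspondingly, so the order of tangency at a band edge is a rescaling-invariant; I therefore work with $\Delta(A)$ on $\mathcal{G}_1$ directly. Per the proof of Theorem \ref{th:spec}, band edges $(z_0,t_0)$ correspond to solutions of $2\cos\sqrt{z_0}=f_\pm(t_0,A)$ for which $f_\pm(\cdot,A)$ has a critical point at $t_0$, where $f_\pm(t,A):=\cos A\cos t\pm\sqrt{1+\sin^2 A\sin^2 t}$. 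Implicit differentiation of $2\cos\sqrt{z(t)}=f_\pm(t,A)$ yields
\[
-\frac{\sin\sqrt{z(t)}}{\sqrt{z(t)}}\,z'(t)=f'_\pm(t,A),
\]
so, provided $\sqrt{z_0}\notin\pi\mathbb Z$ (the exceptional case is covered by Lemma \ref{le:specmin}), iterating this relation shows that the order of tangency of $z(\cdot)$ at $t_0$ equals the order of the critical point of $f_\pm(\cdot,A)$ at $t_0$. Degeneracy of a band edge is thus equivalent to the corresponding critical point of $f_\pm$ being of order $\geq 4$.

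The critical points of $f_+(\cdot,A)$ are already identified in the proof of Theorem \ref{th:spec}: the trivial ones at $t\in\{0,-\pi\}$ and the non-trivial ones satisfying $\sin^2 t=\sin^{-2}A(\sin^4 A-\cos^2 A)$, present only when $\sin^2 A\geq|\cos A|$. Precisely at the boundary $\sin^2 A=|\cos A|$, equivalently \eqref{eq:degeneracy_condition}, the non-trivial extrema coalesce with a trivial one: with $t=0$ if $A=A_+$ (where $\cos A_+>0$), and with $t=-\pi$ if $A=A_-$ (where $\cos A_-<0$). A direct Taylor expansion around the coalescence point $t=0$, $A=A_+$ produces
\[
f_+(t,A)-f_+(0,A)=\tfrac{1}{2}(\sin^2 A-\cos A)\,t^2-\tfrac{\cos A(1+\cos A)}{8}\,t^4+O(t^6),
\]
whose $t^2$-coefficient vanishes precisely under the critical condition, while its $t^4$-coefficient at $A_+$ equals $-\cos A_+(1+\cos A_+)/8<0$ (since $\cos A_+\in(0,1)$); this gives order-$4$ degeneracy. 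The case $t=-\pi$, $A=A_-$ is entirely analogous, and the critical points of $f_-$ are handled by the symmetry $f_-(t,A)=-f_+(t+\pi,A)$. For $A\notin\{A_\pm\}$ modulo periodicity, the critical point of $f_\pm$ producing the band edge is of order exactly $2$: either directly from the Taylor expansion at a trivial critical point, or by computing $f_+''$ at the non-trivial critical point $t_*$, which reduces to $-\sin^2 t_*\cdot\sin^2 A(1+\sin^2 A)/(1+\sin^2 A\sin^2 t_*)^{3/2}\neq 0$. This delivers the ``only if'' direction.

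To identify the affected edges under the critical condition, observe that the degenerate extrema of $f_\pm$ are $\max f_+=1+|\cos A|$ and $\min f_-=-(1+|\cos A|)$. Within each period $[2\pi k,2\pi(k+1)]$ of $s=\sqrt{z}$, the equation $\cos s=f_\pm/2$ yields four bands via the two branches $s=\arccos(f_\pm/2)+2\pi k$ and $s=2\pi(k+1)-\arccos(f_\pm/2)$. Matching with the formulas of Theorem \ref{th:spec} shows that the four degenerate edges per period are $l_{k+}^2$, $r_{k-}^2$, $(2\pi-r_{k-})^2$ and $(2\pi-l_{k+})^2$. Ordering the four bands within each period as the $4k{+}1$st through $4k{+}4$th in increasing $z$, these are respectively the left edges of bands $4k{+}1, 4k{+}3$ and the right edges of bands $4k{+}2, 4k{+}4$, i.e., the left edges of all odd and the right edges of all even bands. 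The statement about the lower edge of the spectrum is the case $k=0$ of this description.

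The main obstacle is the careful bookkeeping that attributes each of the eight band edges per period to a specific critical point of $f_+$ or $f_-$, together with the verification that the quartic Taylor coefficient does not vanish (which would produce an unwanted higher-order degeneracy). The latter ultimately reduces to $\cos A_\pm\neq 0$, which is true since $|\cos A_\pm|=\sin^2 A_\pm\in(0,1)$ in each period.
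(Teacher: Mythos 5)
Your proposal is correct and follows essentially the same route as the paper: both reduce degeneracy of a band edge to the vanishing of the second- (and automatically third-) order Taylor coefficients of $f_\pm(\cdot,A)$ at its critical points, since the eigenvalues $\lambda_\pm$ of the $M$-matrix differ from $f_\pm$ only by the nonvanishing factor $2\sqrt{z}/\sin\sqrt{z}$, and both then match the degenerate extrema (maxima of $f_+$, minima of $f_-$) to the left edges of odd and right edges of even bands. One small imprecision: the displayed quartic coefficient $-\cos A(1+\cos A)/8$ in your expansion of $f_+$ is only valid under the critical condition $\cos A=\sin^2 A$ (the general coefficient is $\tfrac{\cos A}{24}-\tfrac{\sin^2A}{6}-\tfrac{\sin^4A}{8}$), but since you only use it at $A=A_+$ this does not affect the argument.
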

\begin{proof}
Introduce the notation $\tau =\varepsilon t$. Clearly, the rescaled quasimomentum $\tau$ takes values in the interval  $[-\pi,\pi)$. Consider the $M$-matrix of the operator $\wt{\Delta}_{\varepsilon}(A/\varepsilon,\tau/\varepsilon)$ on the graph $\wtc{G}_{\varepsilon}$, which by Lemma \ref{lemma:rescale} admits the form
\begin{equation*}
M_\varepsilon (z,A/\varepsilon,\tau/\e)=\frac{2\sqrt{z}}{\varepsilon\sin \varepsilon\sqrt{z}}
\begin{pmatrix}
-2\cos \varepsilon\sqrt{z} +\cos(\tau+A) & 1 \\
1 & -2\cos \varepsilon\sqrt{z} +\cos(\tau-A)
\end{pmatrix}.
\end{equation*}
The eigenvalues of the matrix $M(z,\tau)$ are computed as
\begin{equation}\label{eq:evs}
\lambda_{\pm}(z,A/\varepsilon,\tau/\e)=\frac{2\sqrt{z}}{\varepsilon\sin \varepsilon\sqrt{z}}\Big(-2\cos \varepsilon\sqrt{z} + \cos A \cos \tau \pm \sqrt{1+\sin^2 A \sin^2 \tau}\Big).
\end{equation}
The dispersion relation therefore admits the form
\begin{equation}
\label{eq:spedge}
2\cos \varepsilon \sqrt{z}  = \cos A \cos \tau \pm \sqrt{1+\sin^2 A \sin^2 \tau}
\end{equation}
for $\tau\in[-\pi,\pi)$. Then an edge of a spectral band corresponds to  $z$ such that the expression on the right hand side admits an extremum for one of the signs $\pm$. It follows that there could exist four groups of band edges. As in the proof of Theorem \ref{th:spec} we introduce the abbreviation
\[
f_\pm(t)=\cos A \cos \tau \pm \sqrt{1+\sin^2 A \sin^2 \tau}.
\]

Consider the function $\lambda_+$. The global extremum of the function $f_+$ can be attained at the following values of  $\tau$ only:
\begin{align*}
\text{a maximum at }&
\begin{cases}
\sin \tau=0,
\\ \cos A \cos \tau\geq 0
\end{cases}
\quad\text{or}\quad \begin{cases}
\sin^2 \tau =\frac{\sin^4 A-\cos^2 A}{\sin^2 A}, \\ \cos A \cos \tau\geq 0;\end{cases}
\\
\text{a minimum at }&
\begin{cases}
\sin \tau=0,
\\ \cos A \cos \tau\leq 0.
\end{cases}
\end{align*}
We next check whether any of these points yield degeneracy. Degeneracy of a band edge, denoted as $z_\e$, at $\tau=s$ is equivalent to the fact that the expansion of the function $\lambda_{\pm}(z_\varepsilon,A/\varepsilon,\tau/\varepsilon)$ in powers of $\tau-s$ has no terms of the first, second and third orders. We remark that the first order term is absent due to the dispersion relation. Abbreviate  $k_0:=\varepsilon \sqrt{z_\varepsilon}$ and denote the coefficients of the expansion of  $\lambda_{+}(z_\varepsilon,A/\varepsilon,\tau/\varepsilon)$ in powers of $\tau-s$ in a vicinity of an arbitrary  $s$ by $\mu_j(s,A)$. We have
\begin{align*}
\mu_0(s,A)&=0,
\\
\mu_1(s,A)&=\frac{2k_0 \sin s}{\sin k_0}\Big(\frac{\sin^2 A \cos s }{\sqrt{1+ \sin^2 A \sin^2 s}}-\cos A\Big),
\\
\mu_2(s,A)&=\frac{k_0}{\sin k_0}\Big(\frac{\sin^2 A \cos^2 s -\sin^2 A \sin^2 s-\sin^4 A \sin^4 s}{(1+ \sin^2 A \sin^2 s) ^{3/2}}-\cos A\cos s\Big),
\\
\mu_3(s,A)&=\frac{k_0}{3\sin k_0}\Big(-\frac{\sin^2 A \cos s \sin s (4+3\sin^2 A \cos^2 s + 5\sin^2 A \sin^2 s +\sin^4 A \sin^4 s )}{(1+ \sin^2 A \sin^2 s) ^{5/2}}\\
&+\cos A \sin s\Big).
\end{align*}
In the case of $\sin s=0$ we obtain
\begin{align*}
\mu_1(s,A)&=0,
\\
\mu_2(s,A)&=\frac{k_0}{\sin k_0}(\sin^2 A - \cos A \sgn\cos t),
\\
\mu_3(s,A)&=0,
\end{align*}
where $\sgn x$ is the standard notation for the sign of $x$. Hence, for $s=0$ the band edges corresponding to $\lambda_+$ degenerate when $A=A_\pm$, moreover, this takes place at the value of $\tau$ corresponding to a \emph{maximum} of $f_+(t)$.

In the case of $\sin^2 s =\frac{\sin^4 A-\cos^2 A}{\sin^2 A}$ we have
\begin{align*}
\mu_1(s,A)&=0,
\\
\mu_2(s,A)&=\frac{k_0}{\sin k_0}\frac{1-2\sin^4 A - \sin^6 A}{\vert \sin^3 A \vert (1+\sin^2 A)^{3/2}}.
\end{align*}
In this case, the magnetic potentials at which the equality $\mu_2(s,A)=0$ holds are $A_\pm$ only. But then  $\sin^4 A - \cos^2 A=0$ and $\sin s=0$, whence  $\mu_3(s,A)=0$ at the same time. The degeneracy occurs at the values of  $\tau$, corresponding to a maximum of  $f_+(t)$, as above.

Ultimately, at the points $A=A_\pm$ left edges of $(1+4k)$th and right edges of $(4+4k)$th spectral bands, $k=0,1,\dots$, turn out to be degenerate; these edges correspond to maxima of  \eqref{eq:spedge} under the sign choice $+$. Right edges of  $(1+4k)$th and left edges of $(4+4k)$th spectral bands, corresponding to minima of the mentioned expression, are not degenerate for all values of the magnetic potential.

The result pertaining to $f_-(t)$ immediately follows from the relation
\[
f_-(t)=-f_+(t\pm\pi).
\]
Namely, one faces degeneracy only if $A=A_\pm$, but here the edges pertaining to minima of  $f_-(t)$ degenerate, that is, right edges of  $(2+4k)$th and left edges of  $(3+4k)$th,  $k=0,1,\dots$, spectral bands.
\end{proof}
\begin{remark}
The fourth order term of the asymptotic expansion $\lambda_{\pm}(z_\varepsilon,A_\pm/\varepsilon,\tau/\varepsilon)$ in the vicinity of band edges is non-zero. Hence the band function there admits either the form $(t-t_0)^2$ or the form $(t-t_0)^4$ for all values of the magnetic potential $A$. Degeneracy of higher orders is therefore impossible.
\end{remark}

The results of this section can be further compared to those of Section \ref{sec:Rescaling}. In particular, in relation to the lower edge of the spectrum one has the following picture. As the magnetic potential $A$ is increasing from zero, the lower edge of the spectrum is moving to the right; the spectrum, which is band-gap, is of local multiplicity two everywhere. This is the regime where, as $\e\to 0$, the Birman-Suslina homogenisation technique \cite{BSu03} is applicable. At the first critical value of $A$, see \eqref{eq:degeneracy_condition} (see also \eqref{eq:dA} below), the spectrum is still of multiplicity two, but the lower edge of it becomes degenerate in the sense introduced above. As $A$ increases further, a segment of the first band in the vicinity of the lower edge changes local multiplicity from 2 to 4, the mechanism for that is the change of the shape of the function $f_+$ from Fig. \ref{fi:ranf} (b) to Fig. \ref{fi:ranf} (a). The point $\tau/\e$ of quasimomentum, $\tau\in[-\pi,\pi)$, where the corresponding value of the spectral parameter meets the lower edge of the spectrum, is $\tau=0$ at the values of $A$ below the first critical one, splitting into two, $\tau=\tau_\pm$, $\tau_+=-\tau_-$, for higher values of $A$. As $A$ increases further, all of the absolutely continuous spectrum eventually assumes multiplicity 4, which corresponds to a situation where the global minima of Fig. \ref{fi:ranf} (a) become equal to the local minimum at zero. Increasing $A$ further still, the sub-bands of multiplicity 4 start shrinking, so that $\tau_+=\tau_-=0$ at precisely the second critical value of $A$. For higher values of $A$, one faces the classical regime yet again, and the situation starts repeating itself periodically thereafter. It turns out therefore that the critical values of the magnetic potential are exactly those at which the sub-bands of higher local multiplicity either appear or disappear.

The above description clearly carries over virtually unchanged to the behaviour of all left edges of odd-numbered spectral bands and all right edges of even-numbered ones.

As the above argument now clearly requires a detailed analysis of the model near the critical values of the magnetic potential to be carried out, we proceed with it in the following sections.

\section{Analysis in vicinity of a critical value of magnetic potential}\label{sec:Perturbation}

In the present section we will set the value of the magnetic potential to be equal to $A'$ with $A'=A+\delta$, where $A$ is a (critical) magnetic potential such that the equality \eqref{eq:dA} below holds and $\delta$ is an arbitrarily small number.

From now on we will also assume that $\varepsilon$ is arbitrarily small. Our immediate goal will be to derive asymptotics of the inverse to the $M$-matrix, which will then be utilized in constructing the norm-resolvent asymptotics of the operator family $\Delta_{\varepsilon}(A'/\varepsilon,\tau/\varepsilon)$ uniform with respect to the quasimomentum. The asymptotics of the inverse to the $M$-matrix sought will be shown to admit different forms depending on the sign of $\delta$. For $\delta\leq 0$, our ultimate goal is \eqref{eq:label} below, whereas in the case opposite we aim at proving \eqref{eq:labelpos}.

For simplicity, w.l.o.g. we consider the lower edge of the spectrum only, in the case where as $\delta$ increases, the local multiplicity of the named edge goes from 2 to 4. In particular, one can pick the following value of $A$:
\begin{equation}
\label{eq:dA}
A=\arccos \sqrt{\frac{\sqrt{5}-1}{\sqrt{5}+1}}.
\end{equation}
Since for such $A$ one clearly has
\begin{equation}
\label{eq:dAA}
\cos A = \sin^2 A,
\end{equation}
the left edge $z_\e$ of the first spectral band of the operator $\Delta_{\varepsilon}(A/\varepsilon)$ is attained at the quasimomentum $\tau=0$ and is degenerate. Different critical values of $A$ and internal edges of spectral bands can be considered likewise.

 Let
$z'_\varepsilon$ be the left edge of the spectrum of the operator $\Delta_{\varepsilon}(A'/\varepsilon,\tau/\varepsilon)$ on the graph $\wtc{G}_1^\varepsilon$ with magnetic potential $A'$. Let as above $k_0'=\varepsilon \sqrt{ z'_\varepsilon}$.

Further, since in proving norm-resolvent convergence results one obviously needs to distance oneself from the spectrum of both the original operator family and the effective, or homogenised, operator, we will assume throughout that
$$
z\in K_\sigma :=\{\lambda\in K\subset \mathbb{C} : {\rm dist } (\lambda, \mathbb{R})\geq \sigma >0\},
$$
where $K$ is some compact in the complex plane $\mathbb{C}$.

Consider the $M$-matrix $M_{\varepsilon} (z'_\e+z,A'/\varepsilon,\tau/\varepsilon)$  (see \eqref{eq:dwtM}) pertaining to the graph $\wtc{G}_1^\varepsilon$.

Assume first that $\delta\leq 0$. As the magnetic potential decreases from its critical value, the lower edge of the spectrum is attained at the same value of quasimomentum  $\tau=0$ (see the proof of Theorem \ref{th:spec}). For $k_0'$ we have
\[
2\cos k_0'=\cos A +1- \delta \sin A +O(\delta^2),
\]
which implies, in line with Lemma \ref{lemma:rescale}, that $z'_\varepsilon = O(\varepsilon^{-2})$ as $\varepsilon\to 0$.

Expanding the eigenvalues $\lambda_{\pm}(z'_\varepsilon+z,A'/\varepsilon,\tau/\varepsilon)$ of the $M$-matrix, see \eqref{eq:evs},  into the power series in a vicinity of the point $z=0$, $\tau=0$ (i.e., in a vicinity of the lower edge of the spectrum of the operator $\Delta_{\varepsilon}(A')$) and taking into account the identity \eqref{eq:dAA} we obtain
\begin{align*}
\lambda_+(z'_\varepsilon+z,A'/\e,\tau/\varepsilon)&=\frac{1}{\varepsilon^2}\frac{2k_0'}{\sin k_0'}
\Big(\frac{\sin 2A + \sin A}{2}\delta \tau ^2 - \frac{\tau^4}{8}+O(\tau^6)+O(\delta^2\tau^2)\Big)
\\
&+2z+O(\tau^4)+O(\delta \tau^2)+O(\varepsilon^2);
\\
\lambda_-(z'_\varepsilon+z,A'/\e,\tau/\varepsilon)&=\lambda_+(z'_\varepsilon+z,A'/\e,\tau/\e)-\frac{4k_0'}{\varepsilon^2\sin k_0'}+O(1).
\end{align*}

It is easily seen that the eigenvectors of the matrix $M_\varepsilon(z'_\varepsilon+z,A'/\varepsilon,\tau/\varepsilon)$ are orthogonal and after normalization admit the form which handily implies independence on $z,z'_\e$, namely,
\begin{equation}\label{eq:nuvec}
\nu_{\pm}=\frac{1}{\sqrt{2}}
\begin{pmatrix}
\sqrt{1\pm\dfrac{\sin A \sin \tau}{\sqrt{1+\sin^2 A \sin^2 \tau}}} \\[15pt]
\mp \sqrt{1\mp\dfrac{\sin A \sin \tau}{\sqrt{1+\sin^2 A \sin^2 \tau}}}
\end{pmatrix}.
\end{equation}
The asymptotic expansion of $M^{-1}(z'_\varepsilon+z,A'/\varepsilon,\tau/\varepsilon)$ in powers of $\varepsilon$ as $\varepsilon\to0$
is then computed in the basis of $\nu_{\pm}$ as
\begin{align}
\label{eq:label}
M_{\varepsilon}^{-1}(z'_\varepsilon+z,A'/\varepsilon,\tau/\varepsilon)&=
\begin{pmatrix}
\dfrac{1}{2z+\frac{k_0'(\sin 2A + \sin A)}{\sin k_0'} \frac{\delta \tau^2}{\varepsilon^2}-\frac{k_0'}{4 \sin k_0' } \frac{\tau^4}{\varepsilon^2}}  & 0 \\
0 & 0
\end{pmatrix}\\
&+O(\min\{\varepsilon,\varepsilon^2/|\delta|\})+O(|\delta|).\nonumber
\end{align}
In proving the above asymptotic expansion we consider the difference
$$
\lambda_+(z'_\varepsilon+z,A'/\e,\tau/\varepsilon)^{-1}- \dfrac{1}{2z+\frac{k_0'(\sin 2A + \sin A)}{\sin k_0'} \frac{\delta \tau^2}{\varepsilon^2}-\frac{k_0'}{4 \sin k_0' } \frac{\tau^4}{\varepsilon^2}}
$$
and split the domain of $\tau$ into segments $|\tau|\leq \e^\gamma$, $\e^\gamma\leq|\tau|\leq |\delta|^{1/2}$, $|\delta|^{1/2}\leq|\tau|\leq c_0$ for some fixed small enough $c_0$ and the parameter $\gamma>0$ to be picked later, and the remaining part of the dual cell. In each of the named regions the estimate sought is obtained by the standard asymptotic analysis, see e.g. \cite{CEKN} for details of a similar computation. In particular, for $|\tau|\leq \e^\gamma$ one uses the assumption that $z\in K_\sigma$.  The parameter $\gamma$ is then chosen to optimise the estimates thus obtained.

The estimate \eqref{eq:label} provides us with the desired asymptotic formula for $\delta\leq 0$.  Note that at the critical value of magnetic potential, i.e., at the point $\delta=0$, the leading order term of the above asymptotics becomes purely fourth order in $\tau$.

It is not unexpected given that the spectral structure of the operator considered (and in particular, the local multiplicity of the bands' edges) is different at values of $A'$ to the left and to the right of the critical one that the derivation of the asymptotics at $\delta>0$ diverges from the one given above. We now pass over to the consideration of this case.

As the magnetic potential $A'$ increases from its critical value $A$, the lower edge of the spectrum is attained at two values of quasimomentum $\tau=\tau_\pm$ such that
\begin{equation*}
\sin^2 \tau_\pm =\frac{\sin^4 A'-\cos^2 A'}{\sin^2 A'} \text{ and } \cos \tau_\pm\geq 0,
\end{equation*}
whence
\begin{equation}\label{eq:dtpm}
\tau_\pm=\pm\sqrt{2\delta}\sqrt{\sin 2A + \sin A}+O(\delta^{3/2}).
\end{equation}

Proceeding as in the case of $\delta<0$, one then obtains two estimates of the type \eqref{eq:label} independently for $\tau<0$ and $\tau>0$, where in the former $\tau^2$ in \eqref{eq:label} gets replaced by $(\tau-\tau_-)^2$, whereas in the latter it is substituted by $(\tau-\tau_+)^2$. This is clearly due to the fact that Fig. \ref{fi:ranf}, (a) implies that no single simple rational function can fully account for the asymptotic behaviour of $\lambda_+^{-1}$ in the case considered.

We remark that, as it will transpire in Section 9 below, this implies that the operator family $\Delta_\e(A'/\e)$ can generically converge to no scalar differential operator in the norm-resolvent topology when $A'>A$.

Nevertheless, for small enough $\delta$ the situation can be rectified. Indeed, let $\delta=O(\e^2)$.  Then from \eqref{eq:dtpm} we infer that $\tau_+-\tau_-=O(\e)$ which can be seen as the two maxima of Fig. \ref{fi:ranf} blending together. This allows the asymptotics for $\tau>0$ to be matched with that for $\tau<0$ to yield
\begin{align}
\label{eq:labelpos}
M_{\varepsilon}^{-1}(z'_\varepsilon+z,A'/\varepsilon,\tau/\varepsilon)&=
\begin{pmatrix}
\dfrac{1}{2z-\frac{k_0'}{4 \sin k_0' } \frac{\tau^4}{\varepsilon^2}}  & 0 \\
0 & 0
\end{pmatrix}+O(\varepsilon^{1/2})
\end{align}
for $\delta>0$ such that $\delta=O(\e^2)$.

The seeming discrepancy between \eqref{eq:label} and \eqref{eq:labelpos} in that the latter does not contain the term of the order $\delta\tau^2$ is due to the fact that as it is easily seen the latter contributes an $O(\varepsilon^{1/2})$ error provided that $\delta=O(\e^2)$.

The estimate \eqref{eq:label} is therefore extended in a continuous way from $\delta<0$ into an $\e^2$-sized region of positive $\delta$, albeit at a cost of a significantly worse error bound.


\section{Homogenisation of the $\e$-periodic tubular structure}\label{sec:Homogenisation}

In this section we apply the asymptotic expansion of the $M$-matrix obtained in Section \ref{sec:Perturbation} to obtain norm-resolvent convergence, as $\e\to 0$, of our model operators to their effective, or homogenised, counterparts. The mentioned convergence holds up to a unitary transform, providing us with a certain freedom of choosing the effective model in a physically motivated way. When the period of the medium becomes negligible, the model of Sections \ref{sec:Model_graph} and \ref{sec:Rescaling}, although still possessing its microstructure, looks as an infinite wire to a far-away macroscopic observer. It is for this reason that we will be picking an ODE on the real line to serve as a realisation of the effective operator. Since in dimension one the magnetic field is eliminated by a unitary gauge transform, it comes as no surprise that our homogenised model operator is non-magnetic.

The argument goes as follows. We first apply the Krein formula \eqref{eq:resolvent} of Proposition \ref{prop:Krein} to the operator family $\widetilde{\Delta}_\e(A'/\e,\tau/\e)$, where as in Section \ref{sec:Perturbation} we assume that $A'=A+\delta$, and $A$ is a magnetic potential such that the equality \eqref{eq:dA} holds, $\delta$ serving as an arbitrarily small offset, $\delta\leq 0$ or $0<\delta=O(\e^2)$. Further, the spectral parameter $z'_\e+z$, $z\in K_\sigma$ is fixed in a ``vicinity'' of the lower edge of the spectrum $z_\e'$. The requirement that $z\in K_\sigma$ is imposed here to ensure that the spectral parameter is separated from the spectrum, in line with both \cite{BirmanHF1,BirmanHF2} and \cite{BSu03}, in the latter mentioned paper the lower edge of the spectrum $z'_\e$ being at zero.

We express the solution operator $\gamma(z),$ see Proposition \ref{prop:Krein}, in terms of $\gamma(0)$ and the resolvent of the Dirichlet decoupling via  $\gamma(z)=(1-z(\widetilde{\Delta}_{\e,\infty}(A'/\e,\tau/\e))^{-1})^{-1}\gamma(0)$. The proof of this formula is standard and can be found in, e.g., \cite{Ryzhov}. Here the operator $\widetilde{\Delta}_{\e,\infty}(A'/\e,\tau/\e)$ is nothing but the Dirichlet decoupling of the operator $\widetilde{\Delta}_\e(A'/\e,\tau/\e)$, i.e., a self-adjoint operator defined by the same differential expression as the latter subject to Dirichlet boundary conditions at all vertices of the underlying fundamental graph. On the formal level, it is the rescaling of Section \ref{sec:Rescaling} applied to the self-adjoint extension $\widetilde{\Delta}_\infty (A',\tau)$ of the minimal operator described by Lemma \ref{lemma:triple} with the domain
$$
\dom \wt{\Delta}_{\rm max}(A,t) \cap \ker \widetilde{\Gamma}_0.
$$
The estimates
$$
\left\|\left(\widetilde{\Delta}_{\e,\infty}(A'/\e,\tau/\e) - (z'_\e+z)\right)^{-1}\right\|=O(\e^2),\qquad
\|\gamma(z'_\e+z)-\gamma(z'_\e)\|=O(\e^2)
$$
are obtained by the spectral theorem for a self-adjoint operator, see, e.g., \cite{CEKN} for details. They hold uniformly in $\tau$ and $z\in K_\sigma$, which follows easily from the fact that by \eqref{eq:cos} the spectra of $\widetilde{\Delta}(A',\tau)$ and $\widetilde{\Delta}_\infty (A',\tau)$ are separated uniformly in $\tau$ for sufficiently small values of $\delta$, say, $|\delta|<\delta_0$, and Lemma \ref{lemma:rescale}.

The Krein formula \eqref{eq:resolvent} then yields the estimate
$$
\left(\widetilde{\Delta}_{\e}(A'/\e,\tau/\e)- (z'_\e+z)\right)^{-1}=- \gamma(z'_\e) M_\e^{-1}(z'_\e+z,A'/\e,\tau/\e) \gamma(z'_\e)^*
+O(\e^2),
$$
which is uniform in the operator-norm topology for all $\tau\in[-\pi,\pi)$ and $z\in K_\sigma$.

Note that the first term on the right hand side of the last equality is a rank 2 operator, since such is the matrix $M_\e$. The asymptotic formulae for $M_\e^{-1}$ obtained in Section \ref{sec:Perturbation} further show that asymptotically this operator is rank 1. The leading order term of the last asymptotic equality is therefore essentially a scalar function, which we make explicit as follows.

Let $\mathcal{P}(\tau)$ be the orthogonal projection in the boundary space $\mathbb{C}^2$ onto the subspace spanned by the vector $\nu_+$, see \eqref{eq:nuvec}. Define a rank one operator $\breve\Pi(z'_\e,\tau)$ from $\mathbb{C}^2$ to $L_2(\widetilde{\mathcal{G}}_1)$ as follows:
$$
\breve\Pi(z'_\e,\tau):=\gamma(z'_\e) \mathcal{P}(\tau).
$$
By an explicit computation, one has
\begin{equation}\label{eq:as1}
\|\breve\Pi(z'_\e,\tau)\|^2 =: \kappa_{\tau,k'_0,A'}^2= \kappa_{0,k'_0,A'}^2 + O(\tau),
\end{equation}
where
$k'_0=\e \sqrt{z'_\e}$, as usual. The quantity $\kappa_{0,k'_0,A'}$ only depends on the value of the magnetic potential $A'$, since the lower edge of the spectrum $z'_\e$ is uniquely determined by the latter. Its explicitly computed value is omitted here due to the bulkiness of the expression. Note that unlike \cite{CEKN} the error in the above formula is estimated as $O(\tau)$ rather than $O(\tau^2)$, which is attributed to the presence of the magnetic potential.

Further, one has by yet another explicit calculation taking into account \eqref{eq:dA} and the fact that for the chosen value of $A$ one clearly has $k_0=\pi/5$ (recall that as before we have $k_0=\varepsilon \sqrt{z_\varepsilon}$, where $z_\e$ is the lower edge of the spectrum of the operator $\widetilde{\Delta}_\e(A/\e,\tau/\e)$):
\begin{equation}\label{eq:as2}
\kappa_{0,k'_0,A'}^2= \kappa_0^2 + O(\delta)
\end{equation}
with a rather extravagantly looking yet explicit independent constant
\begin{equation}\label{eq:rubbish}
  \kappa_0^2:=\frac{2 \left(\sqrt{5}+7\right) \pi -20 \sqrt{10-2 \sqrt{5}} }{\left(5-\sqrt{5}\right) \pi }\approx 1.26787.
\end{equation}

Using the asymptotics obtained in Section \ref{sec:Perturbation} we now have
\begin{equation}\label{eq:ref}
\left(\widetilde{\Delta}_{\e}(A'/\e,\tau/\e)- (z'_\e+z)\right)^{-1}=- \breve\Pi(z'_\e,\tau) \dfrac{1}{2z+\sigma(A,\tau,\delta,\e)} \breve\Pi(z'_\e,\tau)^* +r(\e,\delta),
\end{equation}
where
$$
r(\e,\delta)=\begin{cases}
O(\min\{\varepsilon,\varepsilon^2/|\delta|\})+O(|\delta|),& \delta\leq0,\\
O(\varepsilon^{1/2}),& \delta=O(\e^2)>0
\end{cases}
$$
and
$$
\sigma(A,\tau,\delta,\e)=\begin{cases}
\frac{k_0'(\sin 2A + \sin A)}{\sin k_0'} \frac{\delta \tau^2}{\varepsilon^2}-\frac{k_0'}{4 \sin k_0' } \frac{\tau^4}{\varepsilon^2},& \delta\leq0,\\
-\frac{k_0'}{4 \sin k_0'} \frac{\tau^4}{\varepsilon^2},&
\delta=O(\e^2)>0.
\end{cases}
$$

The operator on the right hand side is, as mentioned above, rank one. We introduce the vector $\psi(\tau)\in {\rm ran\ } \breve\Pi(z'_\e,\tau)$ such that $\|\psi(\tau)\|=1$ for all $\tau$. The equality \eqref{eq:ref} thus admits the form
$$
\left(\widetilde{\Delta}_{\e}(A'/\e,\tau/\e)- (z'_\e+z)\right)^{-1}=- \|\breve\Pi(z'_\e,\tau)\|^2 \dfrac{1}{2z+\sigma(A,\tau,\delta,\e)} \langle \cdot,\psi(\tau)\rangle \psi(\tau) +r(\e,\delta).
$$

We now use \eqref{eq:as1}, \eqref{eq:as2} to obtain by a straightforward estimate
$$
\left(\widetilde{\Delta}_{\e}(A'/\e,\tau/\e)- (z'_\e+z)\right)^{-1}=- \kappa_0^2 \dfrac{1}{2z+\sigma(A,\tau,\delta,\e)} \langle \cdot,\psi(\tau)\rangle \psi(\tau) +r_0(\e,\delta)
$$
with
$$
r_0(\e,\delta)=\begin{cases}
O(\min\{\varepsilon^{1/2},\varepsilon/|\delta|\})+O(|\delta|),& \delta\leq0,\\
O(\varepsilon^{1/2}),& \delta=O(\e^2)>0.
\end{cases}
$$

Ultimately, we pick a fiber-wise unitary gauge which will enable us to invoke a suitable inverse Gelfand transfer so that the leading order asymptotics derived thus far naturally maps to an ordinary differential operator on the real line.

Following \cite{CEKN}, to this end for each $\tau$ we select a unitary operator $U_\tau$ mapping the function $\psi(\tau)$ on the graph $\wtc{G}_1$ to the normalised unity function $\e^{-1/2}\pmb 1$ on the interval $[0,\e]$. This yields the direct integral over quasimomentum of the fiber operators
$$
\dfrac{1}{-2\kappa_0^{-2}z-\kappa_0^{-2}\sigma(A,\tau,\delta,\e)} \langle \cdot,\e^{-1/2}\pmb 1\rangle \e^{-1/2}\pmb 1 +\tilde r_0(\e,\delta),
$$
where the error terms $\tilde r_0(\e,\delta)$ admit the same uniform in $\tau$ estimate as $r_0(\e,\delta)$. Transforming it further by the (inverse) Gelfand transform associated with the infinite chain graph with links of length $\e$ (it is convenient here to think that the mentioned chain graph is naturally embedded into $\mathbb R$), by repeating the argument of \cite[Section 10]{CEKN} we obtain the following theorem.

\begin{theorem}\label{thm:MAIN}
Let $A$ be the critical value of the magnetic potential defined by \eqref{eq:dA}. Let $A'-A=\delta$ and $z'_\e$ be the lower edge of the spectrum of the operator ${\Delta}_\e(A'/\e)$; $k'_0=\e \sqrt{z'_\e}$.

Let $\mathcal{A}_-$ and $\mathcal{A}_+$ be self-adjoint operators in $L^2(\mathbb R)$ defined by the differential expressions
$$
\kappa_0^{-2}\left(\frac{k_0'(\sin 2A + \sin A)}{\sin k_0'} \frac{\delta d^2}{dx^2}+\frac{k_0'}{4 \sin k_0' } \frac{d^4}{dx^4}\right)
$$
and
$$
\kappa_0^{-2}\left( \frac{k_0'}{4 \sin k_0'} \frac{d^4}{dx^4} \right),
$$
respectively.

Then the following estimates hold uniformly in $z\in K_\sigma$ with a unitary operator $\Phi:L^2(\mathcal{G}^\e_1)\mapsto L^2(\mathbb{R})$:
  $$
  \|({\Delta}_\e(A'/\e)-(z'_\e+z))^{-1}-\Phi^*(\mathcal A_{-}-2\kappa_0^{-2}z)^{-1}\Phi\|= O((\min\{\varepsilon^{1/2},\varepsilon/|\delta|\})+O(|\delta|),\quad \delta\leqslant0,
  $$
and
  $$
  \|({\Delta}_\e(A'/\e)-(z'_\e+z))^{-1}-\Phi^*(\mathcal A_{+}-2\kappa_0^{-2}z)^{-1}\Phi\|= O(\varepsilon^{1/2}),\quad \delta=O(\e^2)>0.
  $$
Here $\kappa_0$ is defined by \eqref{eq:rubbish} and we also have $k'_0=k_0+O(\delta)$ with $k_0=\pi/5$.

\end{theorem}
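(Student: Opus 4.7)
My plan is to reduce the proof to a three-stage argument: first pass to the fiber picture, then use a fiberwise Krein-type resolvent formula to express the operator family through the $M$-matrix, and finally translate the resulting rank-one leading term into an ODE on $\mathbb{R}$ via a fiberwise gauge transform followed by the inverse Gelfand transform on a model chain graph. The bulk of the technical work is already encoded in the asymptotics \eqref{eq:label}--\eqref{eq:labelpos} of Section \ref{sec:Perturbation}; what remains is to package these asymptotics into a norm-resolvent statement uniform in quasimomentum.

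Concretely, I would start from Theorem \ref{th:direct} and work fiberwise. In each fiber I apply the Kre\u\i n formula of Proposition \ref{prop:Krein} to the operator $\widetilde{\Delta}_\e(A'/\e,\tau/\e)$, which is the almost solvable extension of $\widetilde{\Delta}_{\min}$ corresponding to $B=0$, to obtain
\[
\bigl(\widetilde{\Delta}_\e(A'/\e,\tau/\e) - (z'_\e+z)\bigr)^{-1} = \bigl(\widetilde{\Delta}_{\e,\infty} - (z'_\e+z)\bigr)^{-1} - \gamma(z'_\e+z)\, M_\e^{-1}(z'_\e+z,A'/\e,\tau/\e)\, \gamma(z'_\e+z)^{*}.
\]
The Dirichlet decoupling term is controlled as $O(\e^2)$ by the spectral theorem, since for $|\delta|<\delta_0$ the dispersion relation \eqref{eq:cos} together with Lemma \ref{lemma:rescale} gives a uniform-in-$\tau$ separation of the Dirichlet spectrum from $z'_\e$. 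The same argument replaces $\gamma(z'_\e+z)$ by $\gamma(z'_\e)$ modulo $O(\e^2)$, isolating $-\gamma(z'_\e)\, M_\e^{-1}\, \gamma(z'_\e)^{*}$ as the only nontrivial contribution.

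Next I substitute the expansions \eqref{eq:label} or \eqref{eq:labelpos}, which show that the leading part of $M_\e^{-1}$ is rank one, aligned with the spectral projection $\mathcal{P}(\tau)$ onto the eigenvector $\nu_+$ of \eqref{eq:nuvec}. The leading-order fiber operator therefore equals $-\kappa_{\tau,k_0',A'}^2\,(2z+\sigma(A,\tau,\delta,\e))^{-1}\langle\cdot,\psi(\tau)\rangle \psi(\tau)$, where $\psi(\tau)$ is the normalized vector in $\mathrm{ran}\,\breve\Pi(z'_\e,\tau)$ and the norm $\|\breve\Pi(z'_\e,\tau)\|^2=\kappa_0^2+O(\tau)+O(\delta)$ is independent of $\tau$ to leading order by \eqref{eq:as1}--\eqref{eq:as2}. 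To identify the effective operator, I would pick in each fiber a unitary gauge $U_\tau$ sending $\psi(\tau)$ to $\e^{-1/2}\mathbf{1}$ on $[0,\e]$, turning the direct integral into multiplication by $-\kappa_0^{-2}(2z+\sigma)^{-1}$ against the rank-one projector onto $\e^{-1/2}\mathbf{1}$. Applying the inverse Gelfand transform associated with the chain graph of period $\e$ embedded in $\mathbb{R}$, the quasimomentum variable $\tau/\e$ becomes the Fourier dual, and the substitutions $\tau^2/\e^2\mapsto -d^2/dx^2$, $\tau^4/\e^4\mapsto d^4/dx^4$ read off from $\sigma(A,\tau,\delta,\e)$ exactly produce the operators $\mathcal{A}_-$ and $\mathcal{A}_+$. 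The composite unitary $\Phi$ is then the Gelfand transform of Section \ref{sec:Fibre_representation} composed with the fiberwise gauge and the inverse chain-graph Gelfand transform.

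The main obstacle is the uniform-in-$\tau$ control at the two competing scales of the asymptotics. For $\delta\leq 0$ this is manifest in the optimization $\min\{\e^{1/2},\e/|\delta|\}$ arising from splitting the $\tau$-domain into regions $|\tau|\leq\e^\gamma$, $\e^\gamma\leq|\tau|\leq|\delta|^{1/2}$, and $|\delta|^{1/2}\leq|\tau|\leq c_0$, with $\gamma$ chosen to balance the estimates; I would track this splitting carefully and verify that the rank-two to rank-one reduction contributes only lower-order corrections. For $\delta>0$, the subtlety is that the band minima at $\tau_\pm=\pm\sqrt{2\delta}\sqrt{\sin 2A+\sin A}+O(\delta^{3/2})$ prevent a single scalar rational function from describing $M_\e^{-1}$ globally; this is reconciled only when $\delta=O(\e^2)$, in which case $\tau_+-\tau_-=O(\e)$ and the two quadratic branches merge into the pure quartic \eqref{eq:labelpos}, at the price of the weaker bound $O(\e^{1/2})$. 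Collating all these estimates and combining them with the Dirichlet remainder yields the two norm-resolvent bounds asserted in Theorem \ref{thm:MAIN}.
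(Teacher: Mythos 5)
Your proposal is correct and follows essentially the same route as the paper: fibre decomposition, the Kre\u\i n formula with the Dirichlet decoupling and $\gamma(z'_\e+z)\to\gamma(z'_\e)$ replacements controlled as $O(\e^2)$, substitution of the $M^{-1}$ asymptotics \eqref{eq:label}--\eqref{eq:labelpos} to isolate the rank-one leading term with norm $\kappa_0^2+O(\tau)+O(\delta)$, and the fibrewise gauge $U_\tau$ followed by the inverse chain-graph Gelfand transform to read off $\mathcal A_\mp$. The error bookkeeping you describe (the $\tau$-region splitting for $\delta\le0$ and the merging of the minima $\tau_\pm$ when $0<\delta=O(\e^2)$) matches the paper's treatment.
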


\begin{remark}
1. We reiterate that this result continues to hold for all critical values of the magnetic potential $A$, see \eqref{eq:degeneracy_condition} and in vicinity of the left edges of all odd and of the right edges of all even bands of the spectrum of ${\Delta}_\e(A'/\e)$. The numerical value of $\kappa_0$ as well as the r\^ole of the sign of $\delta$ are of course subject to the required adaptation.

2. The unitary operator $\Phi$ appearing in the statement of the last theorem is rather involved, which is to be expected since its ``task'' is to reduce the complicated $\e$-periodic internal geometry of the tubular structure to the real line, cf. \cite{CherErKis}, where a similar unitary transform was utilised. An explicit form of this operator can be given as follows: $\Phi = U_\e^* (U_\tau\oplus V) U T_\e^*$. Here $T_\e$ is the unitary rescaling operator introduced in Section \ref{sec:Rescaling}, $U$ is the unitary Gelfand transform of Section \ref{sec:Fibre_representation} pertaining to the graph $\mathcal{G}_1$, $U_\tau$ is as above the unitary operator mapping the function $\psi(\tau)$ on the graph $\wtc{G}_1$ to the normalised unity function $\e^{-1/2}\pmb 1$ on the interval $[0,\e]$, $V$ is an arbitrary unitary operator mapping  $\{\psi(\tau)\}^\bot$ in $L^2(\widetilde{\mathcal{G}}_1)$ to $\{\e^{-1/2}\pmb 1\}^\bot$ in $L^2([0,\e])$, and, finally, $U_\e$ is the Gelfand transform of Section \ref{sec:Fibre_representation} applied to the infinite chain graph embedded into $\mathbb{R}$ with all edge lengths equal to $\e$. Despite the obvious complexity of this definition, it is notable that the essential ingredient of $\Phi$ is exactly the rank-one unitary gauge $U_\tau$, responsible for ``replanting'' a particular linear function from the fundamental graph of $\mathcal{G}_1$ to the fundamental graph of an $\e$-periodic chain graph.

3. We remark that at larger positive $\delta$ it is impossible to obtain a norm-resolvent asymptotics to a scalar ordinary differential operator. The underlying reason for that is that in this situation the dispersion relation of the homogenised medium admits a shape which can be only attained by considering a matrix operator. We conjecture that the proper effective model in this case can be constructed using the set of techniques introduced in \cite{survey}.

4. As mentioned in Section \ref{sec:Perturbation}, for $\delta\le 0$ such that $\delta=O(\e^2)$ the second-order part of the homogenised ODE can be dropped. In this case, $\mathcal{A}_-$ can be replaced by $\mathcal{A}_+$; the error bound $O(\sqrt{\e})$ stays sharp.

5. The estimate provided by Theorem \ref{thm:MAIN} only applies to the transition regime from the second-order operator to the fourth-order one, not covering the ``standard'' homogenisation regime for negative $\delta=O(1)$, where one obtains the classical second-order limit operator. Using the approach developed in the paper, it seems feasible that a unified estimate that would describe the continuous transition between pre-critical and critical cases could be obtained. The merit of rather tedious work required to do so is however questionable, since the classical homogenisation is a very well-studied area. On the other hand, the situation to the right of the critical value of the magnetic potential ($\delta>0$) appears much more interesting, giving rise to a matrix ordinary differential operator as the homogenisation limit and, on this basis, to the regimes with negative group velocity, usually attributable to metamaterial-type behaviour. This analysis falls beyond the scope of this paper and will appear elsewhere.

\end{remark}

\section*{Acknowledgements}
AVK's work was supported by the EPSRC grant EP/V013025/1 ``Quantitative tools for upscaling the micro-geometry of resonant media''. KR's work was supported by the grant of CONACyT CF-2019 No.\,304005.

AVK is grateful to the research team at IIMAS-UNAM, and in particular to Dr Luis O. Silva, for their
kind hospitality during his stay in Mexico from March to October 2022.

Both authors are grateful to their anonymous referees for providing valuable comments.

\emph{Data access statement.} No new data were generated or analysed during this study.


\begin{thebibliography}{99}

\bibitem{BSu03}
Birman, M. Sh., Suslina, T. A.,  2004. Second order periodic differential operators. Threshold properties and homogenisation. {\it St. Petersburg Math. J.}
{\bf 15}(5), 639--714.

\bibitem{BHS2020}
J.~Behrndt, S.~Hassi, H.~de Snoo, 2020.
{\it Boundary Value Problems, Weyl Functions, and Differential Operators.} Monographs in Mathematics 108, Birkh\"{a}user.

\bibitem{Smudgen2012}
K.~Schm{\"u}dgen, 2012.
\newblock {\em Unbounded self-adjoint operators on Hilbert space.} Graduate Texts in Mathematics 265.
\newblock Springer, Dordrecht.

\bibitem{Kochubej}
A.~N.~Ko{\v{c}}ube{\u\i}, 1975.
\newblock Extensions of symmetric operators and of symmetric binary relations.
\newblock {\em Math. Notes} 17(1), 25--28.

\bibitem{CherKisSilva1} K. D. Cherednichenko, A. V. Kiselev, L. O. Silva, 2020. Scattering theory for non-selfadjoint  extensions of symmetric operators. {\it Oper. Theory Adv. Appl.} 276, 194--230.

\bibitem{Ryzhov}
Ryzhov, V., 2007. Functional model of a class of nonselfadjoint extensions of symmetric operators.
{\it Operator theory, analysis and mathematical physics,}
{\it Oper. Theory Adv. Appl.} {\bf 174}, Birkh\"{a}user, Basel, 117--158

\bibitem{DM}
Derkach, V. A., Malamud M. M., 1991. Generalised resolvents and
the boundary value problems for Hermitian operators with gaps, {\it J. Funct. Anal. } \textbf{95}, 1--95.

\bibitem{Phys_book} Capolino, F., 2009. {\it Theory and Phenomena of Metamaterials.} Taylor \& Francis.

\bibitem{Lions}
Bensoussan, A., Lions, J.-L., Papanicolaou, G., 1978. {\it Asymptotic Analysis for Periodic Structures,} North Holland.

\bibitem{Bakhvalov_Panasenko} Bakhvalov, N.., Panasenko, G., 1989. {\it Homogenisation: Averaging Processes in Periodic Media,} Kluwer Academic Publishers, Dordrecht.

\bibitem{Jikov_book} Jikov, V. V., Kozlov, S. M., Oleinik, O.A., 1994. {\it Homogenisation of Differential Operators and Integral Functionals,} Springer.

\bibitem{KCher}
K. D. Cherednichenko, A. V. Kiselev, 2017. Norm-resolvent convergence of
one-dimensional high-contrast periodic problems to a Kronig-Penney
dipole-type model. {\it Comm. Math. Phys.} 349(2), 441--480.


\bibitem{CEK}
K.~D.~Cherednichenko, Yu.~Yu.~Ershova, A.~V.~Kiselev, 2019. Time-dispersive behaviour as a feature of critical contrast media,  {\it SIAM J. Appl. Math.} 79(2), 690--715.

\bibitem{CherErKis}
K. D. Cherednichenko, Yu. Ershova, A. V. Kiselev, 2020.
\newblock Effective behaviour of critical-contrast PDEs: micro-resonances, frequency conversion, and time dispersive properties. I.
{\it Commun. Math. Phys.} 375, 1833--1884.

\bibitem{CEKN}
K.~Cherednichenko, Y.~Ershova, A.~Kiselev, S.~Naboko, 2019. Unified approach to critical-contrast homogenisation with explicit links to time-dispersive media. {\it Trans. Moscow Math. Soc.} 80(2), 295--342.

\bibitem{antiPost} K. D. Cherednichenko, Yu. Ershova, A. V. Kiselev, 2022. Norm-resolvent convergence for Neumann Laplacians on manifolds thinning to graphs. \href{https://arxiv.org/abs/2205.04397}{arXiv: 2205.04397}

\bibitem{survey} K. Cherednichenko, Yu. Ershova, A.V. Kiselev, V. Ryzhov and L. Silva, 2023. Asymptotic analysis of operator families and applications to resonant media, \emph{From Complex Analysis to Operator Theory: A Panorama}. Operator Theory: Advances and Applications, vol. 291 (2023), 239--311. Birkhäuser, Cham.


\bibitem{Yorzh3} Ershova, Yu., Karpenko, I. I., Kiselev, A.V., 2016. Isospectrality for graph Laplacians under the change of
coupling at graph vertices, {\it J. Spectral Th.}  \textbf{6} (2016), no. 1, 43--66.

\bibitem{Postbook} Post, O. Spectral Analysis
on Graph-Like Spaces. LNM vol. 2039. Springer, 2012.

\bibitem{RS78}
{Reed, M. and Simon, B.},
Methods of Modern Mathematical Physics {IV: Analysis of Operators}, {Elsevier},
{1978}.


\bibitem{Tip_1998}
Tip, A., 1998. Linear absorptive dielectrics. {\it Phys. Rev. A} {\bf 57}, 4818--4841.

\bibitem{Tip_2006}
Tip, A., 2006. Some mathematical properties of Maxwell's equations for macroscopic dielectrics. {\it J. Math. Phys.} {\bf 47} 012902.

\bibitem{Figotin_Schenker_2005}
Figotin, A., Schenker, J. H., 2005. Spectral analysis of time dispersive and dissipative systems, {\it Journal of Statistical Physics,} {\bf 118}(1--2), 199--263.

\bibitem{Figotin_Schenker_2007b}
Figotin, A., Schenker, J. H., 2007. Hamiltonian structure for dispersive and dissipative dynamical systems.
{\it Journal of Statistical Physics} {\bf 128} (4), 969--1056

\bibitem{Zhikov_1989}
Zhikov, V. V., 1989. Spectral approach to asymptotic diffusion problems (Russian). {\it Differentsial'nye uravneniya}
{\bf 25}(1), 44--50.

\bibitem{Shterenberg} Shterenberg, R. G. On the homogenization of a periodic magnetic Schrodinger operator with a degenerate lower edge of the spectrum. (Russian) Zap. Nauchn. Sem. S.-Peterburg. Otdel. Mat. Inst. Steklov. (POMI) 336 (2006), Kraev. Zadachi Mat. Fiz. i Smezh. Vopr. Teor. Funkts. 37, 264–273, 277–278; translation in J. Math. Sci. (N.Y.) 143 (2007), no. 2, 3004–3009

\bibitem{Berkolaiko}
G.~Berkolaiko and P.~Kuchment, 2013.
\newblock {\em Introduction to quantum graphs.} Mathematical Surveys and Monographs 186,
\newblock American Mathematical Society, Providence, RI.

\bibitem{Craster}
Craster, R.V., Kaplunov, J. and Pichugin, A.V.,
High-frequency homogenization for periodic media,
Proceedings Of The Royal Society A-Mathematical Physical And Engineering Sciences,
466 (2010), 2341--2362.

\bibitem{BirmanHF1}
Birman, M. Sh. On the homogenization procedure for periodic operators in the neighbourhood of the edge of internal gap. (Russian) Algebra i Analiz 15, no. 4 (2003), 61-71. English translation in: St. Petersburg Math. J. 15, no. 4 (2004), 507-513.


\bibitem{BirmanHF2}
Birman, M. Sh.; Suslina, T. A. Homogenization of a multidimensional periodic elliptic operator in a neighbourhood of the edge of internal gap. (Russian) Boundary value problems of mathematical physics and related problems in the theory of functions, no. 36. Zap. Nauchn. Sem. Leningrad. Otdel. Mat. Inst. Steklov (LOMI), vol. 318, 2004, pp. 60-74. English translation in: Journal of Math. Sci. 136, no. 2 (2006), 3682-3690.

\bibitem{Pavlov_2008}
 Pavlov, B. S., 2008. Krein formula with compensated singularities for the ND-mapping and the generalized Kirchhoff condition at the Neumann Schrödinger junction. {\it Russ. J. Math. Phys.} 15(3): 364--388.

\bibitem{MPP}
Mikhailova, A. B., Pavlov, B. S., Prokhorov, L. V., 2007. Intermediate Hamiltonian via Glazman’s splitting and analytic perturbation
for meromorphic matrix-functions. {\it Math. Nachr.} 280(12): 1376--1416.


\end{thebibliography}
\end{document}